\pgfplotsset{compat=newest}
\newlength\figureheight
\newlength\figurewidth
\newtheorem{theorem}{Theorem}
\newtheorem{corollary}{Corollary}
\newtheorem{lemma}{Lemma}
\theoremstyle{definition}
\newtheorem{definition}{Definition}
\def\cc{c}
\def\rhob{\boldsymbol{\rho}}
\def\y{\mathbf y}
\def\n{{\mathbf n}}
\def\h{{\mathbf h}}
\def\g{{\mathbf g}}
\def\q{{\mathbf q}}
\def\H{{\bf H}}
\def\h{{\bf h}}
\def\I{{\bf I}}
\def\R{{\bf R}}
\def\Q{{\bf Q}}
\title{Improper Gaussian signaling for multiple-access channels in underlay cognitive radio}
\author{\IEEEauthorblockN{Christian Lameiro\authorrefmark{1},~\IEEEmembership{Member,~IEEE,} Ignacio Santamar\'ia\authorrefmark{2},~\IEEEmembership{Senior~Member,~IEEE,} and Peter J. Schreier\authorrefmark{1},~\IEEEmembership{Senior~Member,~IEEE}\\}
\thanks{\authorrefmark{1}C. Lameiro  and P. J. Schreier are with the Signal \& System Theory Group, Universit\"{a}t Paderborn, Germany (email: \{christian.lameiro, peter.schreier\}@sst.upb.de).}
\thanks{\authorrefmark{2}I. Santamar\'ia is with the Department of Communications Engineering,
University of Cantabria, Spain (e-mail: i.santamaria@unican.es).}}
\begin{document}
\maketitle
\begin{abstract}
 	This paper considers an unlicensed multiple-access channel (MAC) that coexists with a licensed point-to-point user, following the underlay cognitive radio paradigm. We assume that every transceiver except the secondary base station has one antenna, and that the primary user (PU) is protected by a minimum rate constraint. In contrast to the conventional assumption of proper Gaussian signaling, we allow the secondary users (SUs) to transmit improper Gaussian signals, which are correlated with their complex conjugate. When the secondary base station performs zero-forcing, we show that improper signaling is optimal if the sum of the interference channel gains (in an equivalent canonical model) is above a certain threshold. Additionally, we derive an efficient algorithm to compute the transmission parameters that attain the rate region boundary for this scenario. The proposed algorithm exploits a single-user representation of the secondary MAC along with new results on the optimality of improper signaling in the single-user case when the PU is corrupted by improper noise. 
\end{abstract}
\begin{keywords}
Improper Gaussian signaling, multiple-access channel, underlay cognitive radio.
\end{keywords}
\section{Introduction}
\label{sec:intro}
The number of devices with wireless connectivity has enormously increased in the last years, and the trend will continue in years to come. Cognitive radio (CR) has been proposed as an efficient means to satisfy the increasing demand on wireless resources. CR is built on the premise that the radio-frequency spectrum is underutilized. That is, the theoretical limits of the spectrum are far from being fully exploited. The CR paradigm consists in the primary users (PUs), which are the rightful owners of the spectrum, sharing their resources with unlicensed users, which are called secondary users (SUs), as long as the performance of the PUs is not compromised. Depending on how the PU is protected and the approach followed by the SUs to access the channel, there is interweave, overlay and underlay CR \cite{Goldsmith2009}. In this paper, we consider underlay CR, where the SUs are allowed to access the channel as long as the resulting interference at the primary receiver is below a tolerable threshold.

In wireless communications in general, and in underlay CR in particular, proper Gaussian signaling (PGS) is a common assumption in the study of theoretical limits and achievable rate regions. The reason behind this is that such signaling scheme is known to be optimal for the point-to-point, broadcast and multiple-access channels (BC and MAC, respectively) \cite{Cover}. In more sophisticated communication scenarios, such as the interference channel (IC), the optimal input distribution, except for some special cases, is still unknown. Indeed, when interference is treated as noise, the transmission of Gaussian signals that are improper rather than proper, i.e., correlated with their complex conjugate, has been shown to increase the performance in scenarios where interference is a limiting factor. As underlay CR is an interference-limited scenario (since the performance of the SUs is limited by the interference they cause to the PU), improper  Gaussian signaling (IGS) seems to be a promising approach. 

An improper complex random variable is, as opposed to a proper one, correlated with its complex conjugate \cite{Neeser1993}.\footnote{Impropriety is related to circularity. We say that a random variable $x$ is circular (or circularly symmetric) if its distribution is invariant under rotations of the form $e^{j\theta}x$, for any arbitrary constant $\theta$ \cite{Schreier}. While circularity implies propriety, the converse does not hold in general. As an exception, a Gaussian random variable that is proper is also circular.} Therefore, the differential entropy of an improper Gaussian random variable is lower than the differential entropy of its proper counterpart. Indeed, the maximum entropy theorem states that the entropy of a complex random variable, with a given covariance matrix, is maximized if the random variable is proper and Gaussian \cite{Neeser1993}, which explains the optimality of PGS in the aforementioned scenarios. However, when interference is treated as noise, the achievable rate decreases with increasing entropy of the interference. Thus, if the interference is improper, the achievable rate increases. Additionally, when the receiver is corrupted by improper noise, IGS becomes optimal for that user. Due to these facts, the usefulness of IGS in interference-limited scenarios becomes evident.

The payoffs of IGS over PGS were first shown by Cadambe \emph{et al.} \cite{Cadambe2010}. They studied the degrees-of-freedom (DoF), which characterize the asymptotic sum-capacity, of the single-antenna 3-user IC with constant channel extensions. They found that, while PGS provided 1 DoF, 1.2 DoF were achievable by IGS. A great deal of works has followed these lines showing the payoffs of IGS for different interference-limited networks, such as the IC \cite{Ho2012,Zeng2013,LagenMorancho2015} (also with discrete modulation schemes in \cite{Nguyen2015}), the Z-IC \cite{Lameiro2017,Lagen2014,Lagen2016,Kurniawan2015}, broadcast channel treating interference as noise \cite{Zeng2013_2,Park2013}, underlay and overlay CR \cite{LameiroSantamariaSchreier:2015:Benefits-of-Improper-Signaling-for-Underlay,LameiroSantamariaSchreier:2015:Analysis-of-maximally-improper-signalling,Amin2016,Amin2017,Gaafar2017}, relay channels \cite{Gaafar2016,Gaafar2016letter}, etc.

So far, the works analyzing IGS for underlay CR have focused on secondary networks comprised of a single SU. For example, in our previous work \cite{LameiroSantamariaSchreier:2015:Benefits-of-Improper-Signaling-for-Underlay} we consider a 2-user IC, where one of the users is the PU (which employs PGS) and the other one corresponds to the SU (which may use IGS). In this setting, and subject to a PU rate constraint, we derive the optimal transmission parameters of the SU and an insightful condition determining when IGS outperforms PGS. Similarly, \cite{Amin2016} considers the same scenario but assuming statistical channel state information (CSI), thus the analysis of this work focuses on the outage probability. A similar scenario is considered in \cite{Gaafar2017}, where a single SU coexists with a full-duplex PU. 

Due to the lack of results of IGS in multiuser secondary networks, this paper aims at providing insights in such scenarios. To this end, we consider a single PU sharing the spectrum with a secondary MAC (SMAC). The secondary users may use IGS, whereas the PU, unaware of any secondary users, employs PGS.  Our aim is to derive insights into the properties of IGS as well as its performance limits. Because of that, we assume that global CSI is available at the SUs. The results we obtain in this paper will then serve as design guidelines when more realistic assumptions on the availability of the CSI are considered. Additionally, our results can be applied outside the context of cognitive radio, when the point-to-point transmitter is restricted to PGS due to lack of CSI or because it is a legacy user (see, e.g., \cite{LagenMorancho2015}). For PGS, this scenario has been studied, e.g., in \cite{Zhang2009,Jorswieck2011,Ozcan2013,Sboui2015,Zhang2008cr}. A similar scenario, called partial interfering MAC (PIMAC), has been considered in \cite{Sezgin2017} with IGS and outside the context of cognitive radio. In that scenario, a point-to-point link coexists with a MAC, and the authors numerically optimized the IGS parameters when the MAC has two single-antenna users and a single-antenna base station. Here, we consider a similar setting as in \cite{Zhang2008cr}. That is, every user has one antenna, except for the secondary base station (BS), whose number of antennas is at least equal to the number of SUs. Unlike \cite{Zhang2008cr}, we consider a single PU but allow the SUs to employ IGS. \added{Our main contributions are summarized next.}

\begin{itemize}
 \item \added{In contrast to existing works, we consider IGS for a multiuser rather than a single-user secondary network, the SMAC. Assuming that the interference must be such that the PU achieves a certain minimum rate, we provide a complete characterization of the rate-region boundary of the SUs when the BS employs a zero-forcing (ZF) decoding scheme. For this setting, we show that the system behaves as in an equivalent single-SU scenario. We leverage this observation to derive a necessary and sufficient condition for a given boundary point to be achieved by IGS. This condition is stated as a threshold on the sum of the interference channel coefficients (in an equivalent canonical model), above which IGS outperforms PGS.}
 
 \item \added{We provide an algorithm to compute the transmission parameters that achieve the boundary of the rate region. The algorithm is based on closed-form expressions rather than numerical optimization, hence permitting insights into the behavior of IGS. The proposed algorithm successively identifies the SUs for which either the constraint on power budget or on the degree of impropriety is active. This subset of users has known parameters and the optimization has to be carried out only for the remaining users. The key idea is again the transformation of the SMAC into an equivalent single-user channel. This makes it possible to apply our previous results for the single-user case \cite{LameiroSantamariaSchreier:2015:Benefits-of-Improper-Signaling-for-Underlay}. However, in this equivalent single-user representation of the SMAC, the equivalent noise at the primary receiver may be improper, which requires extending our results in \cite{LameiroSantamariaSchreier:2015:Benefits-of-Improper-Signaling-for-Underlay} to this case. Therefore, as a byproduct, our paper also extends the results in \cite{LameiroSantamariaSchreier:2015:Benefits-of-Improper-Signaling-for-Underlay} to the case where the PU is corrupted by improper noise, which is a relevant result on its own and hence represents an additional contribution.}
\end{itemize}

The rest of the paper is organized as follows. Section \ref{sec:model} introduces the system model. The optimal signaling for a single SU coexisting with a PU affected by improper noise is analyzed in Section \ref{sec:ii}. Section \ref{sec:opt} formulates the characterization of the rate region boundary under ZF decoding and, using the results obtained in Section \ref{sec:ii}, derives the optimal transmission parameters attaining this boundary. Numerical examples are provided in Section \ref{sec:sim}. 

\section{System model}\label{sec:model}
\subsection{Preliminaries about improper random variables}
We start by providing the necessary background on improper random variables. We refer the reader to \cite{Schreier} for a comprehensive treatment of the topic.
\begin{definition}[\hspace{-0.25pt}\cite{Schreier}]
	The complementary variance of a zero-mean complex random variable $x$ is defined as $\tilde{\sigma}_x=\operatorname{E}[x^2]$. If $\tilde{\sigma}_x=0$, then $x$ is called proper, otherwise improper. 
\end{definition}
Furthermore, $\sigma_x^2$ and $\tilde{\sigma}_x$ are a valid pair of variance and complementary variance if and only if $\sigma_x^2\geq0$ and $|\tilde{\sigma}_x|\leq\sigma_x^2$.
\begin{definition}[\hspace{-0.25pt}\cite{Schreier}]\label{def:cc}
	The circularity coefficient of a complex random variable $x$, which measures the degree of impropriety, is defined as $\cc_x=|\tilde{\sigma}_x|/\sigma_x^2$. The circularity coefficient satisfies $0\leq\cc_x\leq1$. If $\cc_x=0$, then $x$ is proper, otherwise improper. If $\cc_x=1$ we call $x$ maximally improper.
\end{definition}

\subsection{System description}
\begin{figure}[t!]
\centering
\subfloat[]{\includegraphics[width=1\columnwidth]{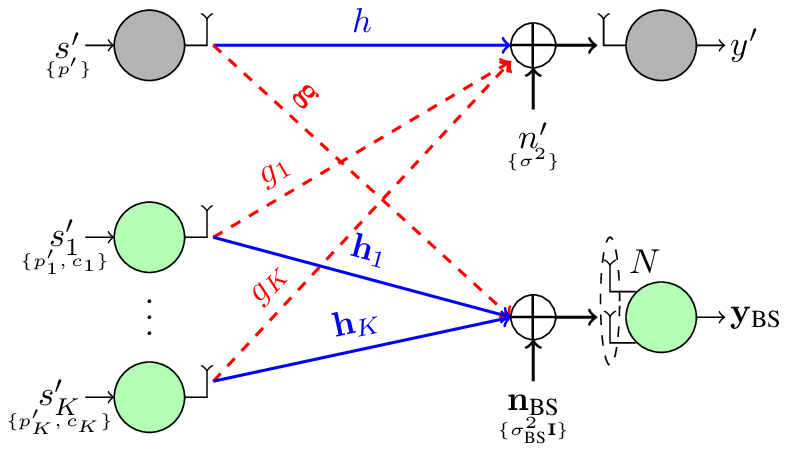}\label{fig:SMACi}}\\%
\subfloat[]{\includegraphics[width=1\columnwidth]{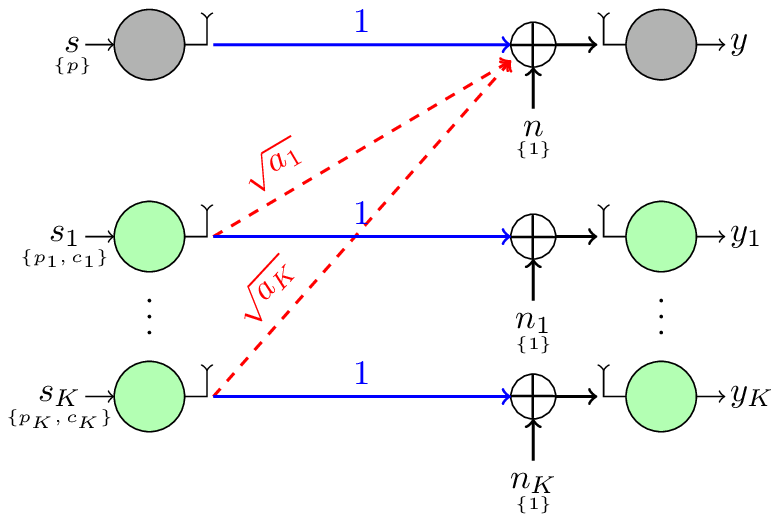}\label{fig:SMACcm}}%
\caption{(a) Considered underlay cognitive radio scenario and (b) its canonical model after ZF-SIC. The variance (or covariance matrix) and the circularity coefficient of the signals and noise are given in brackets. The top link (gray nodes) is the PU, and the system at the bottom (green nodes) is the SMAC. Additionally, the nodes on the left-hand side are the transmitters, and those on the right-hand side are the receivers.}%
\label{fig:SMAC}%
\end{figure}
\added{We consider a primary point-to-point link transmitting at a fixed rate $\bar{R}$, where both transmitter and receiver have a single antenna. Located in the vicinity of the PU, there is an SMAC comprised of $K$ SUs and a secondary base station (BS), as depicted in Fig. \ref{fig:SMACi}. The SUs access the same frequency band used by the PU, hence interference between the two systems is created. Following the underlay CR principle, the SUs are allowed to access the channel as long as the PU performance is not compromised, i.e., provided the generated interference is such that the PU can support the prescribed data rate $\bar{R}$.} We consider quasi-static fading channels, i.e., the channels remain constant during the transmission of a codeword. The signal at the primary receiver can be expressed as
\begin{equation}\label{eq:y}
	y'=\sqrt{p'}hs'+\sum_{k=1}^K\sqrt{p'_k}g_k s'_k+n' \; ,
\end{equation}
where $p'$, $h$, and $s'$ are the transmit power, direct channel, and transmit symbol of the PU; $p'_k$, $g_k$, and $s'_k$ denote the transmit power, cross-channel, and transmit symbol of the $k$th SU; and $n'$ is additive white Gaussian noise (AWGN) with variance $\sigma^2$. For all the relevant parameters, we indicate with the subscript $k$ the $k$th SU, whereas no subscript is used for the PU. We assume that the noise and the PU transmit signal are proper complex Gaussian random variables. The SUs, however, transmit complex Gaussian signals that are allowed to be improper with circularity coefficient $c_k$, $k=1,\ldots,K$. Notice that, even though the SUs may benefit if the PU also uses IGS, the primary transmitter is unaware of the secondary system, and thus sticks to PGS, which is a common assumption \cite{LameiroSantamariaSchreier:2015:Benefits-of-Improper-Signaling-for-Underlay,LameiroSantamariaSchreier:2015:Analysis-of-maximally-improper-signalling,Amin2016,Amin2017,Gaafar2017}. Furthermore, the primary transmitter does not require any side information (such as channel state information) if it follows a PGS scheme. This would not be the case if the PU used IGS.

The signal at the secondary BS can be expressed as 
\begin{equation}
	\y_{\text{BS}}=\sum_{k=1}^K\sqrt{p'_k}\h_ks'_k+\sqrt{p'}\g s'+\n_{\text{BS}} \; ,
\end{equation}
where $\n_{\text{BS}}$ is AWGN, which is assumed to be proper with covariance matrix $\sigma_{\text{BS}}^2\I$, and $\h_k$ and $\g$ are the direct channel of the $k$th SU and the PU cross-channel, respectively. The BS applies ZF successive interference cancellation (ZF-SIC), so that the interference between SUs is removed. Even though the minimum mean square error (MMSE) SIC receiver achieves the capacity of the MAC \cite{Varanasi1997}, a ZF-based decoder is sometimes preferable (see \cite{Zhang2008cr}), as it allows a more tractable analysis since the SUs are only coupled through the PU rate constraint. This permits drawing insights into the superiority of IGS that otherwise would go unnoticed. Additionally, it has been shown in \cite{Caire2003} that ZF-SIC is sum-capacity achieving in the asymptotic high and low signal-to-noise ratio (SNR) regimes.\footnote{The asymptotic optimality is shown in \cite{Caire2003} for the ZF dirty paper coding scheme in the broadcast channel. This result can be translated into the MAC with ZF-SIC as they are duals of each other \cite{Tran2013}.} The ZF-SIC decoder works as follows \cite{Zhang2008cr,Tran2013}.\footnote{We assume that the secondary BS is unaware of the primary transmitter and hence has no knowledge of $\g$, so the interference from the PU is treated as extra proper Gaussian noise. Nevertheless, our results are also valid for any other ZF decoding scheme.} First, the QR decomposition of the channel matrix $\H=[\h_1,\ldots,\h_K]$ is computed, which yields the unitary matrix $\Q$ and the upper-triangular matrix $\R$. The received signal is then multiplied by the unitary decoder $\Q^H$, and SIC is further carried out. Notice that, in our problem, optimizing over the user ordering results in a combinatorial problem. This is because the user ordering affects the equivalent channels (given by the diagonal elements of $\R$), as well as the interference from the PU (given by the projection of $\g$ onto each column of $\Q$), and hence all the $K!$ user orderings should be evaluated. Since this is of prohibitive complexity, we consider a fixed user ordering given without loss of generality by the reversed user indexes, as also done in other works (see, e.g., \cite{Zhang2008cr} and \cite{Caire2003}). Thus, the equalized signals are
\begin{equation}\label{eq:yk}
	y'_k=\sqrt{p'}_kr_ks'_k+\sqrt{p'}\q_k^H\g s'+\q_k^H\n_{\text{BS}} \; , \; k=1,\ldots,K \; ,
\end{equation}
where $r_k$ is the $k$th diagonal entry of $\R$, and $\q_k$ is the $k$th column of $\Q$. For compactness of the expressions and ease of their interpretation, it is sometimes useful to use the canonical model, where the noise variances and direct channel gains are set to one, as we depict in Fig. \ref{fig:SMACcm}. The canonical model is obtained as follows. Let us define $s=e^{j\angle{h}}s'$ and $s_k=e^{j\angle{g_k}}s'_k$, where $\angle{\cdot}$ denotes the phase of a complex scalar. Since $s$ is proper Gaussian, its distribution is exactly equal to that of $s'$. The distribution of $s_k$ is equal to that of $s'_k$ except for the phase of the complementary variance, which is $\phi_k=2\angle{g_k}+\phi'_k$, where $\phi'_k$ is the phase of the complementary variance of $s'_k$. Additionally, we define $n'_k=\sqrt{p'}\q_k^H\g s'+\q_k^H\n_{\text{BS}}$, whose variance is $\sigma_k^2=p'|\q_k^H\g|^2+\sigma_{\text{BS}}^2$. Now we divide \eqref{eq:yk} by $\sigma_k$ and multiply it by $e^{j(\angle{g_k}-\angle{r_k})}$ , obtaining
\begin{align}\label{eq:can1}
	e^{j(\angle{g_k}-\angle{r_k})}\frac{y'_k}{\sigma_k}=y_k&=\sqrt{\frac{p'_k|r_k|^2}{\sigma_k^2}}e^{j\angle{g_k}}s'_k+e^{j(\angle{g_k}-\angle{r_k})}\frac{n'_k}{\sigma_k}\notag\\&=\sqrt{p_k}s_k+n_k \; , \; k=1,\ldots,K \; .
\end{align}
Notice that $p_k=\frac{p'_k|r_k|^2}{\sigma_k^2}$ and that $n_k$ is proper with unit variance. Similarly, we divide \eqref{eq:y} by the noise standard deviation $\sigma$ to obtain
\begin{align}\label{eq:can2}
	\frac{y'}{\sigma}=y&=\sqrt{\frac{p'|h|^2}{\sigma^2}}e^{j\angle{h}}s'+\sum_{k=1}^K\sqrt{\frac{p'_k|g_k|^2}{\sigma^2}}e^{j\angle{g_k}} s'_k+\frac{n'}{\sigma}\notag\\&=\sqrt{p}s+\sum_{k=1}^K\sqrt{\frac{p_k\sigma_k^2|g_k|^2}{\sigma^2|r_k|^2}} s_k+n \; .
\end{align}
Notice again that $p=\frac{p'|h|^2}{\sigma^2}$ and that $n$ is proper with variance one. Finally, we define $a_k=\frac{\sigma_k^2|g_k|^2}{\sigma^2|r_k|^2}$ and obtain the canonical model from \eqref{eq:can1} and \eqref{eq:can2} as
\begin{align}
	y&=\sqrt{p}s+\sum_{k=1}^K\sqrt{p_k}\sqrt{a_k}s_k+n \; ,\\
	y_k&=\sqrt{p_k}s_k+n_k \; , \; k=1,\ldots,K \; .	
\end{align}

\added{Since the PU uses PGS, but is affected by improper interference, its achievable rate can be expressed as (see \cite[Eq. (29)]{Zeng2013})}
\begin{align}
	&R(\{p_k,\cc_k,\phi_k\}_{k=1}^K)=\notag\\&\frac{1}{2}\log_2\left[\frac{\left(1+p+\sum_{k=1}^Ka_kp_k\right)^2-\left|\sum_{k=1}^Ka_kp_k\cc_ke^{j\phi_k}\right|^2}{\left(1+\sum_{k=1}^Ka_kp_k\right)^2-\left|\sum_{k=1}^Ka_kp_k\cc_ke^{j\phi_k}\right|^2}\right] \; ,\label{eq:Rpu}
\end{align}
\added{where the complementary variance of the $k$th SU signal $s_k$ is, by Definition \ref{def:cc}, expressed as $\tilde{p}_k=p_k\cc_ke^{j\phi_k}$,} with $c_k$ and $\phi_k$ being the circularity coefficient and phase of the complementary variance, respectively. Similarly, \added{as the SUs use IGS but are affected by proper noise,} the rate achieved by the $k$th SU can be written as (see  \cite[Eq. (7)]{LameiroSantamariaSchreier:2015:Benefits-of-Improper-Signaling-for-Underlay})
\begin{equation}
	R_k(p_k,\cc_k,\phi_k)=\frac{1}{2}\log_2\left\{1+p_k\left[p_k\left(1-\cc_k^2\right)+2\right]\right\} \; .\label{eq:Rk0}
\end{equation}

Notice that $R_k(p_k,\cc_k,\phi_k)=R_k(p_k,\cc_k)$, i.e., the SU rate is not a function of $\phi_k$. Therefore, $\{\phi_k\}_{k\in\mathcal{K}}$ must be chosen to maximize the PU rate. We notice that
\begin{align}
	&\left(1+p+\sum_{k\in\mathcal{K}}a_kp_k\right)^2>\left(1+\sum_{k\in\mathcal{K}}a_kp_k\right)^2 \notag\\&\Rightarrow \; \frac{\partial R(\{p_k,\cc_k,\phi_k\}_{k\in\mathcal{K}})}{\partial \left|\sum_{k\in\mathcal{K}}a_kp_k\cc_ke^{j\phi_k}\right|^2}>0 \; ,
\end{align}
which is always true. Therefore, maximizing $R$ in $\phi_k$ is equivalent to maximizing \linebreak $\left|\sum_{k\in\mathcal{K}}a_kp_k\cc_ke^{j\phi_k}\right|$. Due to the triangle inequality, we have $\left|\sum_{k\in\mathcal{K}}a_kp_k\cc_ke^{j\phi_k}\right|\leq\sum_{k\in\mathcal{K}}a_kp_k\cc_k$, with equality for
\begin{equation}\label{eq:phiopt}
	\phi_k^\star=\phi \; , \; k=1,\ldots,K \; ,
\end{equation}
i.e., the phase of the complementary variances of the SUs are equal to an arbitrary value $\phi$. Therefore, the optimization needs to be carried out in the transmit powers $p_k$ and circularity coefficients $\cc_k$, $k=1,\ldots,K$. Eq. \eqref{eq:phiopt} can be interpreted by looking at the joint distribution of the real and imaginary parts of each SU signal at the primary receiver. Their probability density contours are ellipses whose major axes are rotated by $\phi_k/2$ \cite{Schreier}. Hence, Eq. \eqref{eq:phiopt} aligns all interference signals along the same dimension, which in turn maximizes the circularity coefficient of the aggregate interference.

In the next sections we characterize the rate region boundary of this SMAC subject to a PU rate constraint. The adopted approach can be summarized as follows.
\begin{enumerate}
	\item The SMAC is first transformed into an equivalent single-user channel, which yields the model in \cite{LameiroSantamariaSchreier:2015:Benefits-of-Improper-Signaling-for-Underlay}. This permits applying those results to derive the optimality condition for IGS.
	\item The optimal transmission parameters, i.e., $p_k$ and $\cc_k$ $\forall k$, are found in closed form by first identifying, one by one, the users for which either the power budget constraint or the constraint on the circularity coefficient is active. We show how the parameters of these users can easily be obtained.
	\item After identifying one of the aforementioned users, the equivalent single-user representation has to be modified. Specifically, the noise at the primary receiver in the equivalent single-user model becomes improper, and thus the equivalent system no longer follows the model in \cite{LameiroSantamariaSchreier:2015:Benefits-of-Improper-Signaling-for-Underlay} (where the noise is proper).
	\item In order to analyze the updated single-user representation, we have to extend the results in \cite{LameiroSantamariaSchreier:2015:Benefits-of-Improper-Signaling-for-Underlay} to the case where the PU is affected by improper noise. This is carried out in Section \ref{sec:ii}, before the SMAC is analyzed (Section \ref{sec:opt}), so as not to break the flow of the analysis.
\end{enumerate}


\section{Single-user case with a primary user affected by improper noise}\label{sec:ii}
\begin{figure}[t!]
\centering
\includegraphics[width=1\columnwidth]{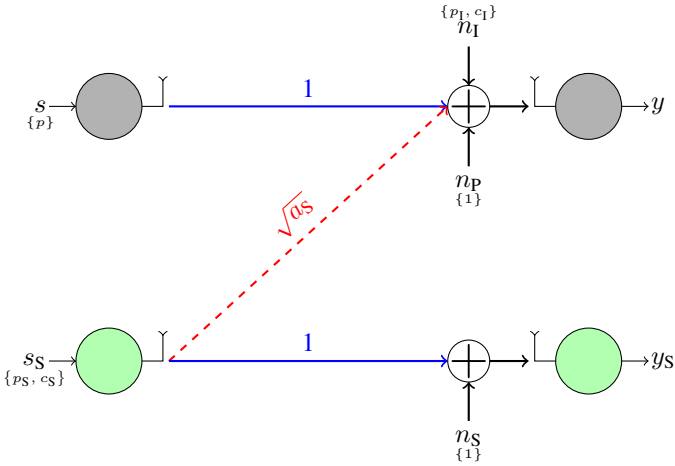}
\caption{Single-SU scenario with a PU affected by improper noise.}
\label{fig:PUwII}
\end{figure}
Let us consider a single SU and assume that the primary receiver is affected by improper noise, as depicted in Fig. \ref{fig:PUwII}. This model will be obtained as the single-user representation of the SMAC, which will be derived in Section \ref{sec:opt}. Therefore, the analysis of this scenario is key to characterize the rate-region boundary of the SMAC. Notice also that this model generalizes the one considered in \cite{LameiroSantamariaSchreier:2015:Benefits-of-Improper-Signaling-for-Underlay}, where the noise at the PU is assumed to be proper. In order to clearly differentiate this scenario from the one having an SMAC, we indicate the parameters of this user with the subscript ``S''. For convenience, we express the improper noise at the PU as a summation of two terms, namely $n=n_{\text{P}}+n_{\text{I}}$, where $n_{\text{P}}$ is proper Gaussian with variance 1, and $n_{\text{I}}$ is improper Gaussian with variance $p_{\text{I}}$ and circularity coefficient $\cc_{\text{I}}>0$, which is independent of $n_{\text{P}}$. The reason behind this decomposition of the noise is that it matches the structure of the noise in the equivalent single-user representation of the SMAC, which we will derive in the next section. Notice also that, by \eqref{eq:phiopt}, the optimal phase $\phi_{\text{S}}^\star$ of the complementary variance of the SU equals $\phi_{\text{I}}$, with $\phi_{\text{I}}$ being the phase of the complementary variance of $n_{\text{I}}$. Therefore, the parameters become again the transmit power and the circularity coefficient. 

Our goal now is to obtain $p_{\text{S}}$ and $\cc_{\text{S}}$ to maximize the SU rate while ensuring the PU rate constraint. That is, we want to solve the problem
\begin{align}
\mathcal{P}_{\text{IN}}:\hspace{0.5cm} & \underset{p_{\text{S}},\cc_{\text{S}}}{\text{maximize}}
& & R_{\text{S}}\left(p_{\text{S}},\cc_{\text{S}}\right) \; ,\notag\\
& \text{subject to}
& & 0\leq p_{\text{S}} \leq P_{\text{S}} \; , \notag\\
& & & 0\leq\cc_{\text{S}}\leq1 \; , \notag\\
& & & R\left(p_{\text{S}},\cc_{\text{S}}\right)\geq\bar{R} \; , \notag
\end{align}
where IN stands for improper noise. $R_{\text{S}}(p_{\text{S}},\cc_{\text{S}})$ follows \eqref{eq:Rk0} and the PU rate is now
\begin{equation}\resizebox{0.49\textwidth}{!}{$
	R(p_{\text{S}},\cc_{\text{S}})=\frac{1}{2}\log_2\left[\frac{\left(1+p+a_{\text{S}}p_{\text{S}}+p_{\text{I}}\right)^2-\left(a_{\text{S}}p_{\text{S}}\cc_{\text{S}}+p_{\text{I}}\cc_{\text{I}}\right)^2}{\left(1+a_{\text{S}}p_{\text{S}}+p_{\text{I}}\right)^2-\left(a_{\text{S}}p_{\text{S}}\cc_{\text{S}}+p_{\text{I}}\cc_{\text{I}}\right)^2}\right] \; .$}\label{eq:RpuII}
\end{equation}
The first constraint in $\mathcal{P}_{\text{IN}}$ is the power budget constraint, whereas the second ensures that the circularity coefficient is valid. The last constraint is the PU rate constraint, which guarantees that the PU performance is not compromised. Notice that $\bar{R}$ has to be small enough so that $\mathcal{P}_{\text{IN}}$ is feasible. Specifically, $\mathcal{P}_{\text{IN}}$ is feasible as long as $\bar{R}\leq R(0,0)$, i.e., the required rate cannot be greater than the achievable rate in the absence of interference. We now use \eqref{eq:RpuII} to express the rate constraint, which is the last constraint in $\mathcal{P}_{\text{IN}}$, after some manipulations as
\begin{align}\label{eq:psqs}
	&a_{\text{S}}^2p_{\text{S}}^2(1-\cc_{\text{S}}^2)\leq(1-\beta)[p+2(1+a_{\text{S}}p_{\text{S}}+p_{\text{I}})]-1\notag\\&-2(a_{\text{S}}p_{\text{S}}+p_{\text{I}})-p_{\text{I}}^2(1-\cc_{\text{I}}^2)-2a_{\text{S}}p_{\text{S}}p_{\text{I}}(1-\cc_{\text{S}}\cc_{\text{I}}) \; ,
\end{align}
where 
\begin{equation}
	\beta=1-\frac{p}{2^{2\bar{R}}-1} \; .\label{eq:beta}
\end{equation}
Let $q(\cc_{\text{S}})$ be the value of $p_{\text{S}}$ that makes \eqref{eq:psqs} hold with equality, which is the maximum transmit power of the SU such that the PU rate constraint is satisfied. The PU rate constraint can then be replaced by $p_{\text{S}}\leq q(\cc_{\text{S}})$, and the SU maximizes its rate by taking $p_{\text{S}}=\min[q(\cc_{\text{S}}),P_{\text{S}}]$, where $P_{\text{S}}$ is the SU power budget. Let us first analyze the behavior of $R_{\text{S}}(p_{\text{S}}=q(\cc_{\text{S}}),\cc_{\text{S}})=R_{\text{S}}(\cc_{\text{S}})$. To this end, we present the following lemma. 
\begin{lemma}\label{th:l1}
	There exists $0<\cc_{\text{\normalfont R}}\leq 1$ such that $R_{\text{\normalfont S}}(\cc_{\text{\normalfont S}})$ is monotonically increasing for $0\leq\cc_{\text{\normalfont S}}\leq\cc_{\text{\normalfont R}}$ and monotonically decreasing for $\cc_{\text{\normalfont R}}<\cc_{\text{\normalfont S}}\leq 1$.
\end{lemma}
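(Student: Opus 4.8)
The plan is to write the objective as a monotone transform of a single function and thereby reduce the whole statement to a one‑variable sign analysis. By \eqref{eq:Rk0} we have $R_{\text{S}}(\cc_{\text{S}})=\frac12\log_2\big(1+f(\cc_{\text{S}})\big)$ with $f(\cc_{\text{S}})=q(\cc_{\text{S}})^2(1-\cc_{\text{S}}^2)+2q(\cc_{\text{S}})$, where $q=q(\cc_{\text{S}})$ is the power attaining equality in \eqref{eq:psqs}. Since $1+f\geq 1$ and $\log_2$ is increasing, $\operatorname{sign}R_{\text{S}}'(\cc_{\text{S}})=\operatorname{sign}f'(\cc_{\text{S}})$, so the lemma reduces to showing that $f'$ is first positive and then negative, changing sign exactly once on $(0,1]$.

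First I would obtain $q'(\cc_{\text{S}})$ by implicit differentiation of \eqref{eq:psqs} written as an equality $G(p_{\text{S}},\cc_{\text{S}})=0$, with $G$ denoting (LHS $-$ RHS) of \eqref{eq:psqs}. As a polynomial in $p_{\text{S}}$, $G$ is an upward parabola with leading coefficient $a_{\text{S}}^2(1-\cc_{\text{S}}^2)\geq 0$, and $q$ is its larger root (the maximum feasible power), at which $\partial G/\partial p_{\text{S}}>0$. Computing the two partials and grouping gives
\[
q'(\cc_{\text{S}})=\frac{q\,(a_{\text{S}}q\,\cc_{\text{S}}+p_{\text{I}}\cc_{\text{I}})}{a_{\text{S}}q(1-\cc_{\text{S}}^2)+\beta+p_{\text{I}}(1-\cc_{\text{S}}\cc_{\text{I}})}\;=:\;\frac{N}{D}.
\]
Because $q$ is the larger root, $D>0$; and $N>0$ for $\cc_{\text{S}}\geq 0$, $\cc_{\text{I}}>0$, $q>0$. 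Hence $q'>0$, reflecting that impropriety relaxes the PU constraint and admits more power.

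Next I would substitute $q'=N/D$ into the chain‑rule expression $f'=2q'\,[\,q(1-\cc_{\text{S}}^2)+1\,]-2q^2\cc_{\text{S}}$ and clear the denominator. The cubic terms in $q$ cancel, leaving $f'=\frac{2q}{D}\,\Phi(\cc_{\text{S}})$ with
\[
\Phi(\cc_{\text{S}})=q\,\cc_{\text{S}}\,(a_{\text{S}}-\beta-p_{\text{I}})+p_{\text{I}}\cc_{\text{I}}\,(q+1).
\]
Since $2q/D\geq 0$, the sign of $f'$ (hence of $R_{\text{S}}'$) equals the sign of $\Phi$. I expect this cancellation to be the main computational obstacle: it relies on the collapse $p_{\text{I}}\cc_{\text{I}}q(1-\cc_{\text{S}}^2)+p_{\text{I}}\cc_{\text{I}}q\,\cc_{\text{S}}^2=p_{\text{I}}\cc_{\text{I}}q$, which reduces a messy expression to the compact affine‑in‑$\cc_{\text{S}}$ form above.

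Finally I would establish a single‑crossing property of $\Phi$. At $\cc_{\text{S}}=0$, $\Phi(0)=p_{\text{I}}\cc_{\text{I}}(q(0)+1)>0$ (the noise is improper, so $p_{\text{I}},\cc_{\text{I}}>0$), which forces $R_{\text{S}}$ to increase near $0$ and hence $\cc_{\text{R}}>0$. Writing $\gamma=\beta+p_{\text{I}}-a_{\text{S}}$ and $\Phi=p_{\text{I}}\cc_{\text{I}}+q\,(p_{\text{I}}\cc_{\text{I}}-\gamma\cc_{\text{S}})$: if $\gamma\leq 0$ then $\Phi>0$ on all of $[0,1]$ and $\cc_{\text{R}}=1$; if $\gamma>0$, then at any zero $\cc_{\text{S}}^\star$ of $\Phi$ one has $p_{\text{I}}\cc_{\text{I}}-\gamma\cc_{\text{S}}^\star=-p_{\text{I}}\cc_{\text{I}}/q<0$, so in $\Phi'(\cc_{\text{S}})=q'(p_{\text{I}}\cc_{\text{I}}-\gamma\cc_{\text{S}})-\gamma q$ both terms are negative and $\Phi'(\cc_{\text{S}}^\star)<0$. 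Thus $\Phi$ can only cross zero downward, so it does so at most once on $(0,1]$; together with $\Phi(0)>0$ this yields a unique threshold $\cc_{\text{R}}\in(0,1]$ with $\Phi>0$ (i.e.\ $R_{\text{S}}$ increasing) on $[0,\cc_{\text{R}})$ and $\Phi<0$ (i.e.\ $R_{\text{S}}$ decreasing) on $(\cc_{\text{R}},1]$, which is the claim. Beyond the algebra, the points needing care are the justification that $q$ is the larger root (so $D>0$ and the implicit function theorem applies) and that $q(\cc_{\text{S}})>0$ on the interval, so that $2q/D>0$ and $\operatorname{sign}f'=\operatorname{sign}\Phi$ genuinely holds.
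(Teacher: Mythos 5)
Your proof is correct and follows essentially the same route as the paper's: implicit differentiation of the active PU-rate constraint to get the same formula for $q'(\cc_{\text{S}})$, reduction of $\operatorname{sign}R_{\text{S}}'$ to the sign of a quantity $\Phi$ that is exactly the paper's condition \eqref{eq:drsuf}, positivity at $\cc_{\text{S}}=0$ from $p_{\text{I}}\cc_{\text{I}}>0$, and a case split on the sign of $a_{\text{S}}-\beta-p_{\text{I}}$ with a single-crossing argument. Your version is, if anything, slightly more careful than the paper's (explicitly justifying the positive denominator via the larger-root argument, and showing every zero of $\Phi$ is a strict downward crossing rather than the paper's informal "negative and decreasing" remark), but it is not a different approach.
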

\begin{proof}	
	Please refer to Appendix \ref{app:l1}.	
\end{proof}

Lemma \ref{th:l1} implies a surprising result. If the power budget is sufficiently high, the SU improves its rate by using IGS independently of the interference channel coefficient $a_{\text{S}}$ (as long as $a_{\text{S}}>0$ and $\cc_{\text{I}}>0$). This is the case as long as PGS does not permit maximum power transmission, in which case the transmit power can increase by making the transmit signal improper. Nevertheless, this is still rather surprising and in sharp contrast with the results obtained under proper noise in \cite{LameiroSantamariaSchreier:2015:Benefits-of-Improper-Signaling-for-Underlay,Lameiro2017}, where IGS is shown to improve the rate only when $a_{\text{S}}$ is above a given threshold. As a matter of fact, it is well-known that IGS is optimal if the own receiver is affected by improper noise. This new result establishes that IGS is also optimal (in the sense of Pareto) if the receiver that is interfered with is affected by improper noise (even though the transmitter associated with that receiver may be using PGS). Lemma \ref{th:l1} also states that there are two possible behaviors for $R_{\text{S}}(\cc_{\text{S}})$. Either $\cc_{\text{R}}=1$, so that the rate increases with $\cc_{\text{S}}$, in which case the maximum is achieved at $\cc_{\text{S}}=1$; or $0<\cc_{\text{R}}<1$ and then the rate increases in the interval $0\leq\cc_{\text{S}}\leq\cc_{\text{R}}$, in which case the maximum is achieved at $0<\cc_{\text{S}}=\cc_{\text{R}}<1$. When the power budget is taken into account, the optimal circularity coefficient will not be greater than the value of $\cc_{\text{S}}$ for which $q(\cc_{\text{S}})=P_{\text{S}}$. \added{With these observations, we state the optimal solution of $\mathcal{P}_{\text{IN}}$ in the following theorem.}

\begin{theorem}\label{th:th1}
	The optimal circularity coefficient, which is the optimal solution of $\mathcal{P}_{\text{\normalfont IN}}$, is given by
	\begin{equation}\label{eq:kStar}
		\cc_{\text{\normalfont S}}^\star=\min(\cc_{\text{\normalfont B}},\cc_{\text{\normalfont R}}) \; ,
	\end{equation}
	where 
	\begin{equation}\label{eq:kP}
		\cc_{\text{\normalfont B}}=\max\{0\leq\cc_{\text{\normalfont S}}\leq1: q(\cc_{\text{\normalfont S}})\leq \max[P_{\text{\normalfont S}},q(0)]\} \; ,
	\end{equation}
	and
	\begin{align}\label{eq:kR}
		\cc_{\text{\normalfont R}}=
		\max\Bigg\{0\leq\cc_{\text{\normalfont S}}\leq1:& a_{\text{\normalfont S}}q(\cc_{\text{\normalfont S}})\cc_{\text{\normalfont S}}\left(1-\frac{p_{\text{\normalfont I}}+\beta}{a_{\text{\normalfont S}}}\right)\notag\\&+p_{\text{\normalfont I}}\cc_{\text{\normalfont I}}\left[1+q(\cc_{\text{\normalfont S}})\right]\geq0\Bigg\} \; ,
	\end{align}	
	where $\beta$ is given by \eqref{eq:beta}.
\end{theorem}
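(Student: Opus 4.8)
The plan is to collapse $\mathcal{P}_{\text{IN}}$ to a single scalar optimization in $\cc_{\text{S}}$ and then read off the maximizer from the shape of the reduced objective. The phase is already fixed by \eqref{eq:phiopt}, and $R_{\text{S}}$ does not depend on it, so only $(p_{\text{S}},\cc_{\text{S}})$ remain. Since $R_{\text{S}}(p_{\text{S}},\cc_{\text{S}})$ is strictly increasing in $p_{\text{S}}$ (the derivative of its argument is $2+2p_{\text{S}}(1-\cc_{\text{S}}^2)>0$) and the PU rate constraint is, by construction, equivalent to $p_{\text{S}}\leq q(\cc_{\text{S}})$, the best power for a fixed $\cc_{\text{S}}$ is $p_{\text{S}}=\min[q(\cc_{\text{S}}),P_{\text{S}}]$. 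It therefore suffices to maximize $\tilde{R}(\cc_{\text{S}})=R_{\text{S}}(\min[q(\cc_{\text{S}}),P_{\text{S}}],\cc_{\text{S}})$ over $\cc_{\text{S}}\in[0,1]$.

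First I would show that $q$ is increasing. Differentiating the equality version of \eqref{eq:psqs} implicitly yields $q'(\cc_{\text{S}})=q(\cc_{\text{S}})[a_{\text{S}}q(\cc_{\text{S}})\cc_{\text{S}}+p_{\text{I}}\cc_{\text{I}}]/D$ with $D=a_{\text{S}}q(\cc_{\text{S}})(1-\cc_{\text{S}}^2)+\beta+p_{\text{I}}(1-\cc_{\text{S}}\cc_{\text{I}})$. The numerator is positive (here $q(\cc_{\text{S}})>0$ and $p_{\text{I}}\cc_{\text{I}}>0$), and $D>0$ at $p_{\text{S}}=q(\cc_{\text{S}})$ because the left-minus-right difference in \eqref{eq:psqs} is a convex parabola in $p_{\text{S}}$ that crosses zero from below at the binding power, so its $p_{\text{S}}$-derivative, which is a positive multiple of $D$, is positive there. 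Monotonicity of $q$ implies that the power budget is slack for small $\cc_{\text{S}}$ and active for large $\cc_{\text{S}}$, with a single threshold $\cc_{\text{B}}$ --- exactly the one in \eqref{eq:kP} --- below which the PU constraint binds ($p_{\text{S}}=q(\cc_{\text{S}})$) and above which the budget binds ($p_{\text{S}}=P_{\text{S}}$). The construction $\max[P_{\text{S}},q(0)]$ in \eqref{eq:kP} conveniently forces $\cc_{\text{B}}=0$ in the degenerate case $q(0)>P_{\text{S}}$, where the budget is already active at $\cc_{\text{S}}=0$.

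With these two regions in hand I would assemble $\tilde{R}$. On $[0,\cc_{\text{B}}]$ it coincides with the function $R_{\text{S}}(\cc_{\text{S}})$ of Lemma \ref{th:l1}, which increases on $[0,\cc_{\text{R}}]$ and decreases on $(\cc_{\text{R}},1]$; on $(\cc_{\text{B}},1]$ the power is frozen at $P_{\text{S}}$ and $R_{\text{S}}(P_{\text{S}},\cc_{\text{S}})$ is strictly decreasing in $\cc_{\text{S}}$, since the $\cc_{\text{S}}$-derivative of its argument is $-2P_{\text{S}}^2\cc_{\text{S}}<0$. The two pieces agree at $\cc_{\text{B}}$ by continuity of $q$, so I finish with a two-case argument: if $\cc_{\text{R}}\leq\cc_{\text{B}}$ then $\tilde{R}$ rises to $\cc_{\text{R}}$ and falls thereafter, so the maximizer is $\cc_{\text{R}}$; if $\cc_{\text{R}}>\cc_{\text{B}}$ then $\tilde{R}$ rises on all of $[0,\cc_{\text{B}}]$ and falls on $(\cc_{\text{B}},1]$, so the maximizer is $\cc_{\text{B}}$. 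In either case the optimum is $\min(\cc_{\text{B}},\cc_{\text{R}})$, which is \eqref{eq:kStar}.

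It remains to pin down the explicit form \eqref{eq:kR} of the turning point that Lemma \ref{th:l1} asserts only abstractly. Writing $R_{\text{S}}=\frac{1}{2}\log_2 f$ with $f=1+2p_{\text{S}}+p_{\text{S}}^2(1-\cc_{\text{S}}^2)$ and $p_{\text{S}}=q(\cc_{\text{S}})$, the chain rule makes $\frac{d}{d\cc_{\text{S}}}\tilde{R}$ proportional to $[1+q(\cc_{\text{S}})(1-\cc_{\text{S}}^2)]q'(\cc_{\text{S}})-q(\cc_{\text{S}})^2\cc_{\text{S}}$; inserting $q'$ and clearing the positive denominator $D$ reduces this --- after the $a_{\text{S}}q(\cc_{\text{S}})^2\cc_{\text{S}}(1-\cc_{\text{S}}^2)$ terms cancel and the $p_{\text{I}}$ cross-terms collapse to $p_{\text{I}}(\cc_{\text{I}}-\cc_{\text{S}})$ --- to a positive multiple of $a_{\text{S}}q(\cc_{\text{S}})\cc_{\text{S}}(1-\frac{p_{\text{I}}+\beta}{a_{\text{S}}})+p_{\text{I}}\cc_{\text{I}}[1+q(\cc_{\text{S}})]$. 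Thus $\tilde{R}$ increases exactly where this quantity is nonnegative, and $\cc_{\text{R}}$ is its largest such argument, matching \eqref{eq:kR}. The main obstacle is the bookkeeping rather than the calculus: verifying $D>0$ throughout, carrying out the nontrivial cancellation that yields the compact expression in \eqref{eq:kR}, and confirming that the boundary and degenerate situations ($\cc_{\text{R}}=1$, $\cc_{\text{B}}=1$, or $q(0)>P_{\text{S}}$) are all correctly absorbed into the single formula $\min(\cc_{\text{B}},\cc_{\text{R}})$.
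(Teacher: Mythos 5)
Your proposal is correct and follows essentially the same route as the paper: fix $\cc_{\text{S}}$, set $p_{\text{S}}=\min[q(\cc_{\text{S}}),P_{\text{S}}]$, note that the rate decreases once the power budget binds (giving $\cc_{\text{B}}$), invoke the unimodality from Lemma \ref{th:l1} on the PU-constrained branch (giving $\cc_{\text{R}}$), and conclude $\cc_{\text{S}}^\star=\min(\cc_{\text{B}},\cc_{\text{R}})$. Your chain-rule re-derivation of the condition in \eqref{eq:kR} is exactly the content of the paper's auxiliary Lemma \ref{th:l0} in Appendix \ref{app:l1}, and your verification that the implicit-differentiation denominator is positive is a detail the paper leaves implicit; both check out.
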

\begin{proof}
	Please refer to Appendix \ref{app:th1}.	
\end{proof}
\begin{corollary}\label{th:th11}
	$\cc_{\text{\normalfont B}}$, $\cc_{\text{\normalfont R}}$, and $q(\cc_{\text{\normalfont S}})$ admit a closed-form expression. Additionally, $\cc_{\text{\normalfont R}}=1$ if and only if $a_{\text{\normalfont S}}\geq\xi$, where
	\begin{equation}\label{eq:mu}		
		\xi=\frac{\left[p_{\text{\normalfont I}}(1-\cc_{\text{\normalfont I}})+\beta\right]\left\{\bar{p}^2-\left[p_{\text{\normalfont I}}(1-\cc_{\text{\normalfont I}})+\beta\right]\left[p_{\text{\normalfont I}}(1+\cc_{\text{\normalfont I}})+\beta\right]\right\}}{\bar{p}^2-\left[p_{\text{\normalfont I}}(1-\cc_{\text{\normalfont I}})+\beta\right]^2} \; ,
	\end{equation}
	with $\bar{p}=\frac{p2^{\bar{R}}}{2^{2\bar{R}}-1}$.
\end{corollary}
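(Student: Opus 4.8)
The corollary makes two assertions: that $q(\cc_{\text{S}})$, $\cc_{\text{B}}$ and $\cc_{\text{R}}$ are expressible in closed form, and that $\cc_{\text{R}}=1$ exactly when $a_{\text{S}}\geq\xi$. For the closed forms, the plan is to exploit that \eqref{eq:psqs}, taken with equality, is a quadratic in $p_{\text{S}}$, degenerating to a linear equation only at $\cc_{\text{S}}=1$, where the coefficient $a_{\text{S}}^2(1-\cc_{\text{S}}^2)$ of $p_{\text{S}}^2$ vanishes. Thus $q(\cc_{\text{S}})$ follows from the quadratic formula, selecting the nonnegative root that yields the maximal feasible power. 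For $\cc_{\text{B}}$, I would substitute $p_{\text{S}}=\max[P_{\text{S}},q(0)]$ into the equality form of \eqref{eq:psqs}; the resulting relation is quadratic in $\cc_{\text{S}}$ (the only $\cc_{\text{S}}$-dependence being through $(1-\cc_{\text{S}}^2)$ and the linear term $\cc_{\text{S}}\cc_{\text{I}}$), so its largest root in $[0,1]$ gives $\cc_{\text{B}}$. For $\cc_{\text{R}}$, the boundary $g(\cc_{\text{S}}):=a_{\text{S}}q(\cc_{\text{S}})\cc_{\text{S}}(1-\tfrac{p_{\text{I}}+\beta}{a_{\text{S}}})+p_{\text{I}}\cc_{\text{I}}[1+q(\cc_{\text{S}})]=0$ is linear in $q(\cc_{\text{S}})$; solving it gives $q$ as a rational function of $\cc_{\text{S}}$, which I would substitute back into \eqref{eq:psqs} and clear denominators to obtain a polynomial equation in $\cc_{\text{S}}$ alone, solvable in closed form.

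The core of the corollary is the threshold characterization. The key observation is that, since $g$ is continuous and $\cc_{\text{R}}$ is by \eqref{eq:kR} the largest $\cc_{\text{S}}\in[0,1]$ with $g(\cc_{\text{S}})\geq0$, we have $\cc_{\text{R}}=1$ if and only if $g(1)\geq0$. The plan is therefore to evaluate $g(1)$ in closed form and reduce $g(1)\geq0$ to $a_{\text{S}}\geq\xi$. At $\cc_{\text{S}}=1$ the equation \eqref{eq:psqs} is linear, yielding $q(1)$ directly; introducing the shorthand $A=\beta+p_{\text{I}}(1-\cc_{\text{I}})$ and $B=\beta+p_{\text{I}}(1+\cc_{\text{I}})$, the solution simplifies, using $1-2\beta+\beta^2=(1-\beta)^2$ and the identity $\bar p^2=(1-\beta)(1-\beta+p)$ (which follows directly from \eqref{eq:beta} and the definition of $\bar p$), to $q(1)=(\bar p^2-AB)/(2a_{\text{S}}A)$. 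Substituting this into $g(1)=q(1)(a_{\text{S}}-A)+p_{\text{I}}\cc_{\text{I}}$ and using $2p_{\text{I}}\cc_{\text{I}}=B-A$, the condition $g(1)\geq0$ collapses to $a_{\text{S}}(\bar p^2-A^2)\geq A(\bar p^2-AB)$, after which dividing by $\bar p^2-A^2$ produces precisely $a_{\text{S}}\geq\xi$ with $\xi=A(\bar p^2-AB)/(\bar p^2-A^2)$.

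The main obstacle is the sign bookkeeping in the last division, which governs the direction of the inequality and hence whether the condition reads $a_{\text{S}}\geq\xi$ or $a_{\text{S}}\leq\xi$. I would handle it through the feasibility hypothesis $\bar R\leq R(0,0)$. Evaluating \eqref{eq:psqs} at $p_{\text{S}}=0$ shows that its slack there equals $\bar p^2-AB$, so feasibility is equivalent to $\bar p^2\geq AB$. Moreover, $q(1)\geq0$---necessary for $\cc_{\text{S}}=1$ to be an admissible operating point, and hence for $\cc_{\text{R}}=1$ to be attainable---forces $A>0$, given $a_{\text{S}}>0$ and $\bar p^2\geq AB$. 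Since $B>A$ whenever $\cc_{\text{I}}>0$, the inequalities $A>0$ and $\bar p^2\geq AB$ together give $\bar p^2\geq AB\geq A^2$, so $\bar p^2-A^2>0$ under strict feasibility and the division preserves the inequality direction. The remaining care is to confirm that the selected root of each quadratic lies in the admissible range and that the degenerate cases ($\cc_{\text{S}}=1$ and $A=0$) are consistent with the formulas; these I expect to be routine once the sign structure above is established.
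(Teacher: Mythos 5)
Your proposal is correct and follows essentially the same route as the paper's Appendix C: closed forms are obtained by recognizing that \eqref{eq:psqs} with equality is quadratic in the relevant unknown (degenerating to a linear equation at $\cc_{\text{S}}=1$, which yields $q(1)=(\bar p^2-AB)/(2a_{\text{S}}A)$), and the threshold follows by substituting $q(1)$ into the condition of \eqref{eq:kR} and rearranging to $a_{\text{S}}\geq\xi$. Your sign bookkeeping via $\bar p^2\geq AB$ and $A>0$ is in fact more careful than the paper's own (tacit) treatment of that division.
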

\begin{proof}
	Please refer to Appendix \ref{app:th11}.	
\end{proof}

Theorem \ref{th:th1} provides the optimal circularity coefficient of the SU, and it can be clearly seen that, if $q(0)<P_{\text{S}}$, IGS is optimal independently of any other system parameter. This is because, in such a case, $\cc_{\text{B}}>0$ and $\cc_{\text{R}}>0$. The latter is due to the fact that $p_{\text{I}}>0$ and $\cc_{\text{I}}>0$, as can clearly be seen in \eqref{eq:kR}. \added{Additionally, the optimal transmit power can be obtained as $p_{\text{S}}^\star=q(\cc_{\text{S}}^\star)$.} 

Fig. \ref{fig:ExampleRsu} illustrates the behavior of the rate with $\cc_{\text{S}}$ and for different values of the interference channel coefficient $a_{\text{S}}$. As stated in Lemma \ref{th:l1}, $R_{\text{S}}(\cc_{\text{S}})$ is increasing in the interval $0\leq\cc_{\text{S}}\leq\cc_{\text{R}}$, with $\cc_{\text{R}}$ indicated with a circle in Fig. \ref{fig:ExampleRsu}. The parameters for this example are $p=100$, $p_{\text{I}}=5$, $\cc_{\text{I}}=0.5$ and $\bar{R}=3.31\,\text{b/s/Hz}$, which makes $\xi=2.16$. 
\begin{figure}[t!]
\centering
\includegraphics[width=1\columnwidth]{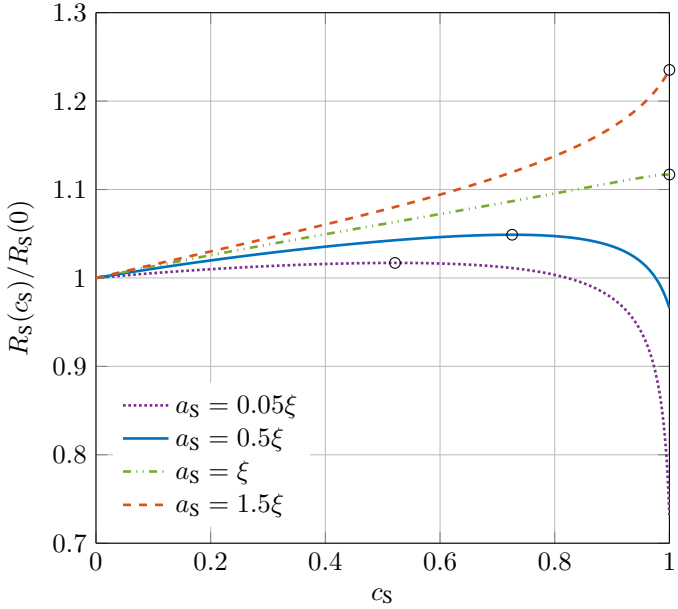}
\caption{Example of the SU rate normalized by the rate achieved by PGS for different values of $a_{\text{S}}$. We indicate the point $\cc_{\text{S}}=\cc_{\text{R}}$ with a circle.}
\label{fig:ExampleRsu}
\end{figure}

In the next section we will characterize the rate region boundary of the considered SMAC. This will be accomplished by deriving an equivalent single-user representation to which we apply the results of this section.

\section{Rate region boundary}\label{sec:opt}
We return to the general case with $K$ SUs forming an SMAC, as detailed in Section \ref{sec:model} (see also Fig. \ref{fig:SMAC}). In order to compute the boundary of the rate region, we use the rate profile approach \cite{Mohseni2006}. Following these lines, a given point of the rate region, characterized by a given set of non-negative quantities $\{\alpha_k\}_{k=1}^K$, with $\sum_{k=1}^K\alpha_k=1$, can be computed by solving the following optimization problem.
\begin{align}
\mathcal{P}:\hspace{0.5cm} & \underset{\{p_k,\cc_k\}_{k=1}^K,r}{\text{maximize}}
& & r \; ,\notag\\
& \text{subject to}
& & 0\leq p_k \leq P_k \; , \; k=1,\ldots,K \; , \notag\\
& & & 0\leq\cc_k\leq1 \; , \; k=1,\ldots,K \; ,\notag\\
& & & R_k(p_k,\cc_k)\geq\alpha_kr \; , \; k=1,\ldots,K \; ,\notag \\
& & & R(\{p_k,\cc_k\}_{k=1}^K)\geq\bar{R} \; , \notag
\end{align}
where $\bar{R}$ is the PU rate constraint (with $\bar{R}\leq R(\{0,0\}_{k=1}^K)$ so that the problem is feasible) and we have taken the optimal phases $\phi_k=\phi$ $\forall k$ (see \eqref{eq:phiopt}). Our aim is to solve the above problem analytically, so that insightful conclusions about the optimality of IGS can be drawn. Note also that these insights can also be transferred to the case where the secondary BS performs MMSE-SIC. This is because with the MMSE-SIC scheme there is also interference between the SUs, and hence IGS will also mitigate the impact of this interference. Therefore, if IGS outperforms PGS for ZF-SIC and a given set of canonical channel coefficients $a_1,\ldots,a_K$, it will also do so for MMSE-SIC and the same set of channel coefficients. 

If $\alpha_k=0$ for some $k$, the optimal transmit power for that user is zero, i.e., $p_k=0$. Therefore, we define $\mathcal{K}$ as the set of users for which $\alpha_k>0$, and $\mathcal{P}$ must then be solved for users $k\in\mathcal{K}$. Since $\mathcal{P}$ is not a convex optimization problem due to the PU rate constraint, we first rewrite this constraint as follows. By regarding the aggregate interference as one interference signal with power $\sum_{k\in\mathcal{K}}a_kp_k$ and circularity coefficient $\sum_{k\in\mathcal{K}}a_kp_k\cc_k/\sum_{k\in\mathcal{K}}a_kp_k$, it can be noticed that the PU rate is only a function of the total interference power and the circularity coefficient of the aggregate interference. The rate constraint then establishes a relationship between both. Additionally, the PU rate $R$ decreases with the interference power and increases with its circularity coefficient. Thus, we have
\begin{equation}
	R\geq\bar{R} \; \Leftrightarrow \; \left\{\begin{matrix}
	\sum_{k\in\mathcal{K}}a_kp_k\leq t(\cc;\rhob) \\
	\sum_{k\in\mathcal{K}}a_kp_k\cc_k\geq t(\cc;\rhob)\cc
	\end{matrix}\right. \; ,
\end{equation}
where $t(\cc;\rhob)$ is the tolerable interference power when the circularity coefficient of the aggregate interference is $\cc$ (so that $R\geq\bar{R}$ is fulfilled), and $\rhob=[p_{\text{N}},\tilde{p}_{\text{N}}]$ is a vector of parameters, where $p_{\text{N}}$ and $\tilde{p}_{\text{N}}$ are the variance and complementary variance, respectively, of the noise at the primary receiver. Specifically, this function can be obtained in closed form by Corollary \ref{th:th11} as $t(\cc;\rhob)=a_{\text{S}}q(\cc)$ (i.e., from \eqref{eq:psqs} replacing $\cc_{\text{S}}$ with $\cc$ and $a_{\text{S}}p_{\text{S}}$ with $t(\cc;\rhob)$), taking $p_{\text{I}}=0$, and $\cc_{\text{I}}=0$. Thus, $\rhob=[p_{\text{N}}=1,\tilde{p}_{\text{N}}=0]$. Finally, the optimal value of $r$ in problem $\mathcal{P}$ can be found via bisection. That is, $r$ is fixed and $\mathcal{P}$ is thereby transformed into a feasibility problem. With this in mind, the $\ell$th step of the bisection method solves the feasibility problem 
\begin{align}
\mathcal{P}_\ell:\hspace{0.5cm} & \text{find}
& & \{p_k,\cc_k\}_{k\in\mathcal{K}},\cc \; ,\notag\\
& \text{subject to}
& & 0\leq p_k \leq P_k \; , \; k\in\mathcal{K} \; , \notag\\
& & & 0\leq\cc_k\leq1 \; , \; k\in\mathcal{K} \; ,\notag\\
& & & R_k(p_k,\cc_k)\geq\alpha_kr \; , \; k\in\mathcal{K} \; ,\notag \\
& & & \sum_{k\in\mathcal{K}}a_kp_k\leq t(\cc;\rhob) \; , \notag\\
& & & \sum_{k\in\mathcal{K}}a_kp_k\cc_k\geq t(\cc;\rhob)\cc \; . \notag
\end{align}

In the following, we will analyze this feasibility problem for a fixed value of $\cc$. This way, we will be able to obtain closed-form expressions for the feasible points in terms of $\cc$, which will provide insights into its optimal value. Let $\mathcal{K}_{\text{f}}$ denote the set of users for which $\log_2(1+P_k)=\alpha_kr$ (if any) and $\mathcal{K}_{\text{a}}=\mathcal{K}\backslash\mathcal{K}_{\text{f}}$. Notice that for users $k\in\mathcal{K}_{\text{f}}$ the rate $\alpha_kr$ is only achievable for $\cc_k=0$ and $p_k=P_k$. Therefore, the feasibility problem $\mathcal{P}_\ell$ is solved only for the users $k\in\mathcal{K}_{\text{a}}$, while treating the users in $\mathcal{K}_{\text{f}}$ as a fixed additional proper interference affecting the primary receiver, so that the equivalent noise level is $1+\sum_{k\in\mathcal{K}_{\text{f}}}a_kP_k$. Taking this into account, we may write the PU rate constraint as 
\begin{align}
	\sum_{k\in\mathcal{K}_{\text{a}}}a_kp_k&\leq t(\cc;\rhob) \; , \label{eq:teq}\\
    \sum_{k\in\mathcal{K}_{\text{a}}}a_kp_k\cc_k&\geq t(\cc;\rhob)\cc \; , \label{eq:keq}
\end{align}
where now $\rhob=[p_{\text{N}}=1+\sum_{k\in\mathcal{K}_{\text{f}}}a_kP_k,\tilde{p}_{\text{N}}=0]$. In order to obtain the transmission parameters that attain a given boundary point, we will first determine whether or not IGS is required to achieve that particular point. To this end, it is sufficient to analyze problem $\mathcal{P}_\ell$, at this stage, for $\cc\rightarrow0$. If $\mathcal{P}_\ell$ is feasible, there exists a feasible point such that \eqref{eq:teq} and \eqref{eq:keq} hold with equality while $p_k<P_k$ and $\cc_k<1$ for all $k\in\mathcal{K}_{\text{a}}$, and therefore these constraints can be removed from $\mathcal{P}_{\ell}$. This can be explained as follows. First, as $\cc$ approaches zero, every $\cc_k$ goes towards zero as well, and $\cc_k<1$ is then fulfilled for some feasible point (as the SU rates decrease with $\cc_k$). Second, as $\cc_k\rightarrow0$, the SU rates converge to $\log_2(1+p_k)$. Since the users in $\mathcal{K}_{\text{a}}$ fulfill $\log_2(1+P_k)>\alpha_kr$, a feasible point satisfying $p_k<P_k$ and \eqref{eq:teq} with equality can then be chosen. Obviously, this is the case as long as $t(\cc;\rhob)<\sum_{k\in\mathcal{K}_{\text{a}}}a_kP_k$ as $\cc\rightarrow0$. Otherwise, PGS is the optimal strategy for the whole rate region boundary as the PU rate constraint is inactive.  

For a given $\cc$, the feasibility problem $\mathcal{P}_\ell$ is a convex optimization problem (notice that $p_k\cc_k$ can be replaced by $\tilde{p}_k$ plus the constraint $0\leq\tilde{p}_k\leq p_k$). The Lagrangian reads
\begin{align}
		\mathcal{L}=&\sum_{k\in\mathcal{K}_{\text{a}}}\nu_k\left[R_k(p_k,c_k)-\alpha_kr\right]+\lambda\left(t(\cc;\rhob)-\sum_{k\in\mathcal{K}_{\text{a}}}a_kp_k\right)\notag\\&+\mu\left(\sum_{k\in\mathcal{K}_{\text{a}}}a_kp_k\cc_k-t(\cc;\rhob)\cc\right) \; ,\label{eq:L}
\end{align}
where we have dropped the constraints $p_k\leq P_k$ and $\cc_k\leq1$, as previously explained. In the above expression, $\nu_k\geq0$, $\lambda\geq0$ and $\mu\geq0$ are the Lagrange multipliers of the SU rate, interference power, and interference circularity coefficient constraints, respectively. Notice also that the constraints $p_k\geq0$ and $\cc_k\geq0$ have also been removed. As we will show later, these constraints will automatically be satisfied without considering them explicitly. Differentiating the Lagrangian with respect to $p_k$ and $\cc_k$ and equating them to zero yields
	\begin{align}
		\frac{a_k(p_k+1)}{\nu_k}&=\frac{1}{\lambda\left(1-\frac{\mu^2}{\lambda^2}\right)} \; , \label{eq:ap}\\
		\frac{a_kp_k\cc_k}{\nu_k}&=\frac{\mu}{\lambda^2\left(1-\frac{\mu^2}{\lambda^2}\right)} \; .\label{eq:apk}
	\end{align}
	Using these expressions we obtain
	\begin{align}
		\sum_{k\in\mathcal{K}_{\text{a}}}a_kp_k&=\frac{\sum_{k\in\mathcal{K}_{\text{a}}}\nu_k}{\lambda\left(1-\frac{\mu^2}{\lambda^2}\right)}-\sum_{k\in\mathcal{K}_{\text{a}}}a_k\leq t(\cc;\rhob) \; , \\
        \sum_{k\in\mathcal{K}_{\text{a}}}a_kp_k\cc_k&=\frac{\mu\sum_{k\in\mathcal{K}_{\text{a}}}\nu_k}{\lambda^2\left(1-\frac{\mu^2}{\lambda^2}\right)}\geq t(\cc;\rhob)\cc \; .
	\end{align}
	As mentioned before, if $\mathcal{P}_\ell$ is feasible, there is a feasible solution satisfying the above constraints with equality. We therefore take equality in the above constraints, and $\mathcal{P}_\ell$ will then be feasible if this solution also satisfies the rate constraints $R_k(p_k,c_k)\geq\alpha_kr$. By taking equality in the above expressions we obtain
\begin{align}
		\lambda&=\frac{\sum_{k\in\mathcal{K}_{\text{a}}}\nu_k}{\left(t(\cc;\rhob)+\sum_{k\in\mathcal{K}_{\text{a}}}a_k\right)\left[1-\frac{t(\cc;\rhob)^2\cc^2}{\left(t(\cc;\rhob)+\sum_{k\in\mathcal{K}_{\text{a}}}a_k\right)^2}\right]} \; ,\label{eq:l}\\
		\mu&=\frac{t(\cc;\rhob)\cc\sum_{k\in\mathcal{K}_{\text{a}}}\nu_k}{\left(t(\cc;\rhob)+\sum_{k\in\mathcal{K}_{\text{a}}}a_k\right)^2\left[1-\frac{t(\cc;\rhob)^2\cc^2}{\left(t(\cc;\rhob)+\sum_{k\in\mathcal{K}_{\text{a}}}a_k\right)^2}\right]} \; .\label{eq:m}
	\end{align}
From the derivatives of the Lagrangian and using the foregoing expressions we obtain
	\begin{align}
		&R_k(p_k,c_k)=\frac{1}{2}\log_2\left(\frac{p_k\cc_k\nu_k}{a_k\mu}\right)=\frac{1}{2}\log_2\left(\frac{a_kp_k\cc_k}{\nu_k}\frac{\nu_k^2}{a_k^2\mu}\right)\notag\\&=\frac{1}{2}\log_2\left[\frac{\nu_k^2}{a_k^2\left(\lambda^2-\mu^2\right)}\right]\notag\\&=\frac{1}{2}\log_2\left\{1+\frac{t(\cc;\rhob)}{\sum_{k\in\mathcal{K}_{\text{a}}}a_k}\left[\frac{t(\cc;\rhob)}{\sum_{k\in\mathcal{K}_{\text{a}}}a_k}\left(1-\cc^2\right)+2\right]\right\}\notag\\&+\log_2\frac{\nu_k\sum_{k\in\mathcal{K}_{\text{a}}}a_k}{a_k\sum_{k\in\mathcal{K}_{\text{a}}}\nu_k} \; .\label{eq:Rk}
	\end{align}
	That is, the point that satisfies the interference power and impropriety constraints with equality provides the SUs with the rate given by \eqref{eq:Rk}. As mentioned before, $\mathcal{P}_\ell$ is then feasible if there exist $\nu_k$, $k\in\mathcal{K}_{\text{a}}$, such that $R_k(p_k,c_k)\geq\alpha_kr$, which implies
	\begin{align}
		&\frac{\nu_k}{\sum_{k\in\mathcal{K}_{\text{a}}}\nu_k}\geq\notag\\&\resizebox{0.49\textwidth}{!}{$\frac{2^{\alpha_kr}a_k}{\left\{1+\frac{t(\cc;\rhob)}{\sum_{k\in\mathcal{K}_{\text{a}}}a_k}\left[\frac{t(\cc;\rhob)}{\sum_{k\in\mathcal{K}_{\text{a}}}a_k}\left(1-\cc^2\right)+2\right]\right\}^{1/2}\sum_{k\in\mathcal{K}_{\text{a}}}a_k} \; , \; k\in\mathcal{K}_{\text{a}} \; .$}\label{eq:nu}
	\end{align}
	The left-hand side of this expression is equal to or smaller than one, hence this condition is equivalent to
	\begin{align}
		\frac{1}{2}\log_2&\left\{1+\frac{t(\cc;\rhob)}{\sum_{k\in\mathcal{K}_{\text{a}}}a_k}\left[\frac{t(\cc;\rhob)}{\sum_{k\in\mathcal{K}_{\text{a}}}a_k}\left(1-\cc^2\right)+2\right]\right\}\notag\\&\geq\alpha_kr+\log_2\frac{a_k}{\sum_{k\in\mathcal{K}_{\text{a}}}a_k} \; , \; k\in\mathcal{K}_{\text{a}} \; .\label{eq:eq}
	\end{align}
  Notice that the above expression can be interpreted in a very insightful way. Its left-hand side is the rate of an equivalent user whose interference channel coefficient is the sum of the interference channels of all users, transmitting with a circularity coefficient $\cc$ and causing interference power $t(\cc;\rhob)$. This single-user representation of the SMAC is illustrated in Fig. \ref{fig:EqSUi}. Therefore, $r$ can be maximized by choosing the value of $\cc$ that maximizes the left-hand side of \eqref{eq:eq}. Before going any further, we first show that the transmit powers and circularity coefficients leading to \eqref{eq:eq} (for feasible $r$) are valid as they are non-negative. To this end, we combine \eqref{eq:ap}--\eqref{eq:m} and obtain
  \begin{align}
  	p_k&=\frac{\nu_k'\left(t(\cc;\rhob)+\sum_{k\in\mathcal{K}_{\text{a}}}a_k\right)}{a_k}-1 \; ,\label{eq:pk}\\
  	\cc_k&=\frac{\nu_k't(\cc;\rhob)\cc}{\nu_k'\left(t(\cc;\rhob)+\sum_{k\in\mathcal{K}_{\text{a}}}a_k\right)-a_k} \; ,\label{eq:kk}
  \end{align}
  where $\nu_k'=\frac{\nu_k}{\sum_{k\in\mathcal{K}_{\text{a}}}\nu_k}$. The transmit powers and circularity coefficients are non-negative if
  \begin{equation}
  	\nu_k'\geq\frac{a_k}{t(\cc;\rhob)+\sum_{k\in\mathcal{K}_{\text{a}}}a_k} \; .
  \end{equation}
  Combining the above expression with \eqref{eq:nu}, it can easily be seen that this is always the case.
\begin{figure}[t!]
\centering
\subfloat[]{\includegraphics[width=1\columnwidth]{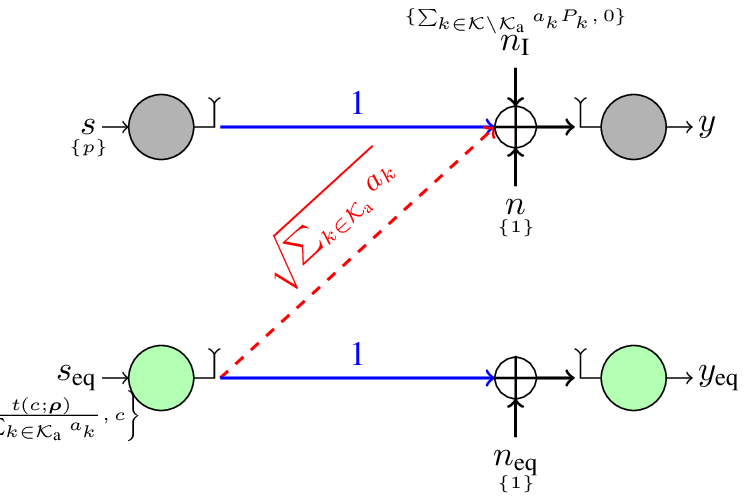}\label{fig:EqSUi}}\\%
\subfloat[]{\includegraphics[width=1\columnwidth]{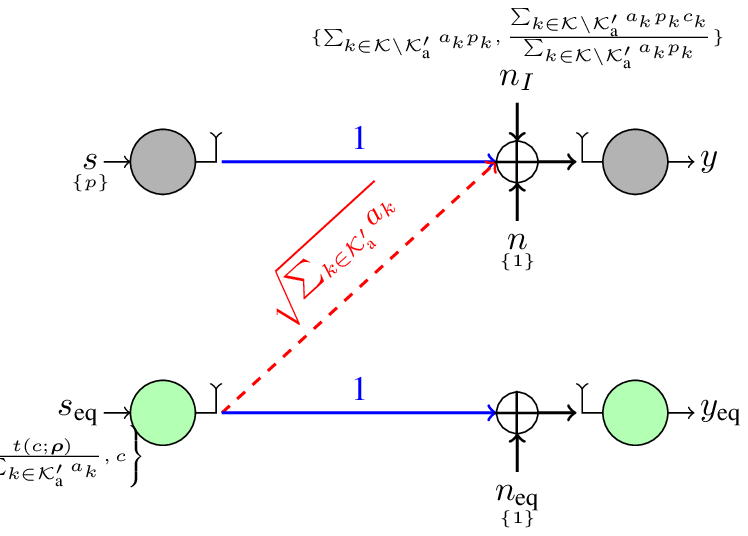}\label{fig:EqSUa}}%
\caption{Equivalent single-user canonical model of the SMAC. There are two possible situations depending on $\cc$: (a) every user in $\mathcal{K}_{\text{a}}$ fulfills $\cc_k<1$ and $p_k
<P_k$, or (b) one of these constraints is active for some users.}%
\label{fig:EqSU}%
\end{figure}  
  
  Since the foregoing analysis transforms the scenario into an equivalent single-user model, we can use our previous results for the single-SU case \cite{LameiroSantamariaSchreier:2015:Benefits-of-Improper-Signaling-for-Underlay} to obtain the following main result.
 \begin{theorem}\label{th:th0}
 	The boundary point characterized by $R_k=\alpha_kr^\star$, $k=1,\ldots,K$, with $r^\star$ being the optimal solution of $\mathcal{P}$, requires IGS if and only if
 	\begin{equation}
 		\sum_{k\in\mathcal{K}_{\text{\normalfont a}}}a_k\geq\sum_{k\in\mathcal{K}\backslash\mathcal{K}_{\text{\normalfont a}}}a_kP_k+\beta \; ,\label{eq:th0}
 	\end{equation}
 	where $\mathcal{K}$ and $\mathcal{K}_{\text{\normalfont a}}$ are the set of users for which $\alpha_k>0$ and $\log_2(1+P_k)>\alpha_kr^\star$, respectively, and $\beta$ is given by \eqref{eq:beta}. 
 \end{theorem}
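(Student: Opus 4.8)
The plan is to reduce the entire question to the single-user problem $\mathcal{P}_{\text{IN}}$ of Section \ref{sec:ii} and then read off the threshold from Corollary \ref{th:th11}. The bridge is the equivalent single-user representation built in \eqref{eq:ap}--\eqref{eq:eq}: for a fixed $\cc$, the largest feasible $r$ in $\mathcal{P}_\ell$ is controlled by the left-hand side of \eqref{eq:eq}, which is precisely the single-user rate \eqref{eq:Rk0} of a user whose interference channel is $a_{\text{S}}=\sum_{k\in\mathcal{K}_{\text{a}}}a_k$ and whose transmit power is $p_{\text{S}}=t(\cc;\rhob)/\sum_{k\in\mathcal{K}_{\text{a}}}a_k$, with the full-power users of $\mathcal{K}\backslash\mathcal{K}_{\text{a}}$ folded into the primary-receiver noise so that $\rhob=[\,p_{\text{N}}=1+\sum_{k\in\mathcal{K}\backslash\mathcal{K}_{\text{a}}}a_kP_k,\ \tilde{p}_{\text{N}}=0\,]$. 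Since maximizing $r$ amounts to maximizing this rate over $\cc$, and since $t(\cc;\rhob)=a_{\text{S}}q(\cc)$ gives $p_{\text{S}}=q(\cc)$, the boundary point coincides with the optimum of $\mathcal{P}_{\text{IN}}$ for a single SU with \emph{proper} noise; in the notation of Section \ref{sec:ii} this is the case $\cc_{\text{I}}=0$ with extra proper power $p_{\text{I}}=\sum_{k\in\mathcal{K}\backslash\mathcal{K}_{\text{a}}}a_kP_k$.

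With the equivalence in hand, \emph{requiring IGS} means that the optimum of $\mathcal{P}_{\text{IN}}$ satisfies $\cc_{\text{S}}^\star=\min(\cc_{\text{B}},\cc_{\text{R}})>0$ by Theorem \ref{th:th1}. I would first dispose of $\cc_{\text{B}}$. The interesting regime is the one in which the PU rate constraint is active, i.e.\ $t(0;\rhob)<\sum_{k\in\mathcal{K}_{\text{a}}}a_kP_k$ (otherwise PGS trivially attains the boundary, as noted before \eqref{eq:L}). In that regime the equivalent single user obeys $a_{\text{S}}P_{\text{S}}\equiv\sum_{k\in\mathcal{K}_{\text{a}}}a_kP_k>t(0;\rhob)\equiv a_{\text{S}}q(0)$, hence $P_{\text{S}}>q(0)$, which forces $\cc_{\text{B}}>0$ through \eqref{eq:kP}. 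Consequently $\cc_{\text{S}}^\star>0$ if and only if $\cc_{\text{R}}>0$, so the whole statement collapses to deciding the sign of $\cc_{\text{R}}$.

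The remaining step is a clean specialization of the $\cc_{\text{R}}$ characterization. Setting $\cc_{\text{I}}=0$ in \eqref{eq:kR} reduces its defining inequality to $a_{\text{S}}q(\cc_{\text{S}})\cc_{\text{S}}\bigl(1-\tfrac{p_{\text{I}}+\beta}{a_{\text{S}}}\bigr)\geq0$, whose sign is governed solely by $a_{\text{S}}-(p_{\text{I}}+\beta)$ because $a_{\text{S}}q(\cc_{\text{S}})\cc_{\text{S}}\geq0$. Thus $\cc_{\text{R}}$ is the all-or-nothing quantity $\cc_{\text{R}}=1$ when $a_{\text{S}}\geq p_{\text{I}}+\beta$ and $\cc_{\text{R}}=0$ otherwise, so $\cc_{\text{R}}>0\Leftrightarrow a_{\text{S}}\geq p_{\text{I}}+\beta$; this matches $\xi=p_{\text{I}}+\beta$, which is exactly what Corollary \ref{th:th11} gives when $\cc_{\text{I}}=0$ is substituted in \eqref{eq:mu}. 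Re-inserting $a_{\text{S}}=\sum_{k\in\mathcal{K}_{\text{a}}}a_k$ and $p_{\text{I}}=\sum_{k\in\mathcal{K}\backslash\mathcal{K}_{\text{a}}}a_kP_k$ turns $a_{\text{S}}\geq p_{\text{I}}+\beta$ into \eqref{eq:th0}, and the two implications of the ``if and only if'' are the two branches of this dichotomy.

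The main obstacle I anticipate is not the final algebra but justifying the equivalence rigorously \emph{at the optimum}: one must argue that $\mathcal{K}_{\text{a}}$, defined through the optimal $r^\star$, is the correct active set to fold into the noise, that the Lagrangian reduction \eqref{eq:ap}--\eqref{eq:eq} genuinely collapses the $|\mathcal{K}_{\text{a}}|$ per-user constraints to one equivalent user, and that the non-negativity of $p_k,\cc_k$ in \eqref{eq:pk}--\eqref{eq:kk} is preserved throughout. A secondary technical point is the degenerate denominator $\bar{p}^2=(p_{\text{I}}+\beta)^2$ in \eqref{eq:mu}, which is why I would argue the sign of $\cc_{\text{R}}$ directly from \eqref{eq:kR} rather than from $\xi$; and the boundary case $a_{\text{S}}=p_{\text{I}}+\beta$, where $R_{\text{S}}(\cc_{\text{S}})$ is flat in $\cc_{\text{S}}$ so IGS attains but need not strictly exceed the PGS optimum, with the $\geq$ in \eqref{eq:th0} following the convention $\cc_{\text{R}}=1$ fixed by Corollary \ref{th:th11}.
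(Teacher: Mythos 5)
Your proposal is correct and follows essentially the same route as the paper: the paper's proof is the one-line observation that \eqref{eq:eq} turns the problem into the equivalent single-user channel with the full-power users folded into a proper noise term, after which the threshold follows from Theorem 1 of \cite{LameiroSantamariaSchreier:2015:Benefits-of-Improper-Signaling-for-Underlay} --- which is precisely the $\cc_{\text{I}}=0$, $p_{\text{I}}=\sum_{k\in\mathcal{K}\backslash\mathcal{K}_{\text{a}}}a_kP_k$ specialization of Section \ref{sec:ii} that you carry out explicitly via \eqref{eq:kR}. Your added care about the degenerate denominator of $\xi$ at $\cc_{\text{I}}=0$ and about deciding the sign of $\cc_{\text{R}}$ directly from \eqref{eq:kR} is a sensible way of filling in what the paper leaves to the cited reference.
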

 \begin{proof}
 	The result follows directly from \eqref{eq:eq} and Theorem 1 in \cite{LameiroSantamariaSchreier:2015:Benefits-of-Improper-Signaling-for-Underlay}. 
 \end{proof}
 \begin{corollary}
 	If the power budget is sufficiently high, the boundary point characterized by $R_k=\alpha_kr^\star$ requires IGS if and only if
 	\begin{equation}
 		\sum_{k\in\mathcal{K}}a_k\geq\beta \; .
 	\end{equation}
 \end{corollary}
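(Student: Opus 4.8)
The plan is to derive this as an immediate specialization of Theorem~\ref{th:th0}, the entire content being to show that a sufficiently large power budget forces $\mathcal{K}_{\text{a}}=\mathcal{K}$, at which point the right-hand side of \eqref{eq:th0} collapses to $\beta$. First I would recall from the discussion preceding Theorem~\ref{th:th0} that $\mathcal{K}_{\text{a}}$ is the set of users with strict slack $\log_2(1+P_k)>\alpha_kr^\star$, while its complement $\mathcal{K}\backslash\mathcal{K}_{\text{a}}=\mathcal{K}_{\text{f}}$ collects the users whose power budget is exactly binding, $\log_2(1+P_k)=\alpha_kr^\star$ (a strict deficit would render $\mathcal{P}$ infeasible, since PGS already maximizes $R_k$ for a fixed $p_k$). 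Thus it suffices to argue that, for large enough budgets, $\mathcal{K}_{\text{f}}$ is empty.

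The crux is that the individual power budgets cannot all be simultaneously binding once they are large, because the transmit powers of the SUs are jointly capped by the PU rate constraint rather than by their own budgets. Concretely, the interference-power constraint \eqref{eq:teq} bounds $\sum_{k\in\mathcal{K}}a_kp_k$ by $t(\cc;\rhob)$, a quantity fixed by the PU requirement $\bar{R}$ and independent of the $P_k$. Hence the optimal $r^\star$ is itself bounded above by a value determined solely by $\bar{R}$ and the channel coefficients, so that once every $P_k$ exceeds the corresponding $2^{\alpha_kr^\star}-1$, no budget can be active. This is the precise reading I would attach to ``sufficiently high power budget'': every $P_k$ is large enough that the binding constraint at the optimum is the PU rate constraint and not any power budget. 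Under this condition $\mathcal{K}_{\text{f}}=\emptyset$, so $\mathcal{K}_{\text{a}}=\mathcal{K}$.

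Substituting $\mathcal{K}_{\text{a}}=\mathcal{K}$ into \eqref{eq:th0} then finishes the argument: the sum $\sum_{k\in\mathcal{K}\backslash\mathcal{K}_{\text{a}}}a_kP_k$ runs over the empty set and vanishes, leaving exactly $\sum_{k\in\mathcal{K}}a_k\geq\beta$, as claimed. The only delicate point — and what I expect to be the main obstacle — is the apparent circularity in the previous paragraph: since $r^\star$ itself depends on the budgets, one must check that $r^\star$ stays bounded as the $P_k$ grow, which is precisely what the budget-independent interference cap $t(\cc;\rhob)$ guarantees. Everything else is a direct substitution into the statement of Theorem~\ref{th:th0}.
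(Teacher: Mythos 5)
Your proposal is correct and follows exactly the route the paper intends (the corollary is left unproved there as an immediate consequence of Theorem~\ref{th:th0}): sufficiently large budgets force $\mathcal{K}_{\text{f}}=\emptyset$, hence $\mathcal{K}_{\text{a}}=\mathcal{K}$, and the condition \eqref{eq:th0} reduces to $\sum_{k\in\mathcal{K}}a_k\geq\beta$. Your added observation that $r^\star$ remains bounded as the $P_k$ grow, because the interference cap $t(\cc;\rhob)$ is independent of the budgets, is a correct and useful filling-in of the one step the paper leaves implicit.
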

 Theorem \ref{th:th0} provides a necessary and sufficient condition when IGS is required to achieve a given boundary point of the rate region. As explained earlier, this condition is also sufficient for the optimality of IGS (but not necessary) when the BS uses an MMSE-SIC receiver. In order to obtain the optimal transmission parameters that achieve that point, we have to analyze how the rate of the SUs changes as $\cc$, which is the circularity coefficient of the aggregate interference (caused by users in $\mathcal{K}_{\text{a}}$), increases. For the single-user case we state in \cite{LameiroSantamariaSchreier:2015:Benefits-of-Improper-Signaling-for-Underlay} that, if IGS is optimal, the rate increases monotonically with the circularity coefficient as long as the allowable transmit power is below the power budget. In our present scenario, there are multiple users each with their own power budget constraint and circularity coefficient. Therefore, as $\cc$ increases, we might reach a point where $p_k=P_k$ or $\cc_k=1$, for some $k\in\mathcal{K}_{\text{a}}$. In such a case, the behavior of the SU rates as $\cc$ increases beyond that point is no longer given by \eqref{eq:Rk}, since that expression is obtained assuming $p_k<P_k$ and $\cc_k<1$. Let $\cc=\cc'$ be the minimum value of $\cc$ such that there is one user $k'$ for which either $p_{k'}=P_{k'}$ or $\cc_{k'}=1$. Since the rate of this user must be at least $\alpha_{k'}r$, the other parameter is therefore given through $R_{k'}(P_{k'},\cc_{k'})=\alpha_{k'}r$ or $R_{k'}(p_{k'},1)=\alpha_{k'}r$, depending on the active constraint. In turn, the parameters of this user are known and the feasibility problem $\mathcal{P}_\ell$ must be solved for $k\in\mathcal{K}_{\text{a}}\backslash\{k'\}$. Therefore, we can regard the scenario in this case as one with users $k\in\mathcal{K}_{\text{a}}\backslash\{k'\}$ and a PU affected by an improper noise (due to user $k'$). This equivalent single-user representation is depicted in Fig. \ref{fig:EqSUa}, where $\mathcal{K}'_{\text{a}}=\mathcal{K}_{\text{a}}\backslash\{k'\}$ and the parameter vector is now $\rhob=[p_{\text{N}}=1+\sum_{k\in\mathcal{K}\backslash\mathcal{K}'_{\text{a}}}a_kp_k,\tilde{p}_{\text{N}}=\sum_{k\in\mathcal{K}\backslash\mathcal{K}'_{\text{a}}}a_kp_k\cc_k]$.  
 
Notice that this only implies a small change in our analysis of $\mathcal{P}_\ell$ in \eqref{eq:teq}--\eqref{eq:eq}. First, the set of users $\mathcal{K}_{\text{a}}$ is replaced by $\mathcal{K}_{\text{a}}\backslash\{k'\}$, Second, the parameter vector $\rhob$ of the tolerated interference power $t(\cc;\rhob)$ is now different, as given above. Nevertheless, this equivalent scenario follows the model considered in Section \ref{sec:ii}, thus \added{the behavior of the system is now governed by the results obtained in Section \ref{sec:ii}, so that the analysis can be further carried out by applying Theorem 1}. Additionally, $t(\cc;\rhob)$ can be obtained in closed form by Corollary \ref{th:th11} taking $p_{\text{I}}=p_{\text{N}}-1$ and $\cc_{\text{I}}=\frac{\tilde{p}_{\text{N}}}{p_{\text{N}}-1}$.

In order to identify the value of $\cc$, $c'$, such that one user reaches its power budget or its circularity coefficient goes to 1, we take equality in \eqref{eq:nu} and plug it in \eqref{eq:pk} and \eqref{eq:kk} yielding
\begin{align}
	p_k&=2^{\alpha_kr}\left(\frac{t(\cc;\rhob)}{\sum_{k\in\mathcal{K}_{\text{a}}}a_k}+1\right)\times\notag\\&\left\{1+\frac{t(\cc;\rhob)}{\sum_{k\in\mathcal{K}_{\text{a}}}a_k}\left[\frac{t(\cc;\rhob)}{\sum_{k\in\mathcal{K}_{\text{a}}}a_k}\left(1-\cc^2\right)+2\right]\right\}^{-1/2}-1 \; ,\label{eq:pk2}\\
	\cc_k&=\frac{2^{\alpha_kr}t(\cc;\rhob)\cc}{2^{\alpha_kr}\left(t(\cc;\rhob)+\sum_{k\in\mathcal{K}_{\text{a}}}a_k\right)}\times\notag\\&\frac{\left\{1+\frac{t(\cc;\rhob)}{\sum_{k\in\mathcal{K}_{\text{a}}}a_k}\left[\frac{t(\cc;\rhob)}{\sum_{k\in\mathcal{K}_{\text{a}}}a_k}\left(1-\cc^2\right)+2\right]\right\}^{-1/2}}{\left\{1+\frac{t(\cc;\rhob)}{\sum_{k\in\mathcal{K}_{\text{a}}}a_k}\left[\frac{t(\cc;\rhob)}{\sum_{k\in\mathcal{K}_{\text{a}}}a_k}\left(1-\cc^2\right)+2\right]\right\}^{-1/2}-1} \; .\label{eq:kk2}
\end{align}
By looking into these expressions we notice the following. Let 
\begin{align}
	k_{\text{p}}&=\underset{k}{\arg\min}\,\frac{P_k-1}{2^{\alpha_kr}} , & k_{\text{c}}&=\underset{k}{\arg\min}\,\alpha_k \; ,\label{eq:kp}\\
	\cc_{\text{p}}&=\underset{\cc}{\arg\min}\,\{p_{k_{\text{p}}}=P_{k_{\text{p}}}\}  , & \cc_{\text{c}}&=\underset{\cc}{\arg\min}\,\{\cc_{k_{\text{c}}}=1\} \; ,\label{eq:kappakappa}
\end{align}
with $p_k$ and $\cc_k$ respectively given by \eqref{eq:pk2} and \eqref{eq:kk2}. Then the SU rates are given by \eqref{eq:Rk} for $0\leq\cc\leq\cc'=\min(\cc_{\text{p}},\cc_{\text{c}})$. As mentioned before, when $\cc$ increases beyond that point, user $k_{\text{p}}$ or $k_{\text{c}}$ has known parameters. Denoting this user as $k'$, the optimization is carried out for users $k\in\mathcal{K}_{\text{a}}\backslash\{k'\}$, and expression \eqref{eq:Rk} is valid by replacing $\mathcal{K}_{\text{a}}$ with $\mathcal{K}_{\text{a}}'=\mathcal{K}_{\text{a}}\backslash\{k'\}$ and updating $\rhob$ accordingly, since now the PU is affected by improper noise in the equivalent single-user representation (due to user $k'$ having fixed and known parameters, as depicted in Fig. \ref{fig:EqSUa}). However, the updated version of \eqref{eq:Rk} will only be valid as long as the remaining users in set $\mathcal{K}_{\text{a}}'=\mathcal{K}_{\text{a}}\backslash\{k'\}$ have a transmit power smaller than their power budget and circularity coefficients smaller than one. Hence, if the SU rates still increase beyond that point, we go back again to identifying the value of $\cc$ at which the next user reaches one of these bounds (if there is such a user) and the aforementioned procedure is repeated again until the optimal value of $\cc$ is identified. We describe this procedure in Algorithm \ref{alg:pl}, where we assume, for the sake of illustration, that $\alpha_k>0$ and $\log_2(1+P_k)>\alpha_kr$ $\forall k$. Notice that the loop in the algorithm successively identifies the users for which either $p_k=P_k$ or $\cc_k=1$ is optimal. Therefore, the maximum number of iterations is equal to the number of SUs, $K$. Furthermore, the computational complexity is small as every step involves only closed-form expressions.

\begin{algorithm}[t!]
\begin{algorithmic}
\STATE{Set $\mathcal{K}_{\text{a}}=\{1,\ldots,K\}$ and $\rhob=[p_{\text{N}}=1,\tilde{p}_{\text{N}}=0]$.}
\LOOP
    \STATE{Use Theorem \ref{th:th1} to obtain $\cc_{\text{S}}^\star$, by replacing $a_{\text{S}}$ with $\sum_{k\in\mathcal{K}_{\text{a}}}a_k$ and taking $p_{\text{I}}=p_{\text{N}}-1$, $\cc_{\text{I}}=\frac{\tilde{p}_{\text{N}}}{p_{\text{N}}-1}$, and $P_{\text{S}}=\sum_{k\in\mathcal{K}_{\text{a}}}P_k$.}
    \STATE{Use Corollary \ref{th:th11} to obtain $t(\cc;\rhob)=a_{\text{S}}q(\cc)$.}
    \STATE{Obtain $k_{\text{p}}$, $k_c$, $\cc_{\text{p}}$, and $\cc_{\text{c}}$ by means of \eqref{eq:kp}--\eqref{eq:kappakappa}, and set $\cc'=\min(\cc_{\text{p}},\cc_{\text{c}})$.}
    \IF{$\cc_{\text{S}}^\star\leq\cc'$}
    \STATE{Set $\cc=\cc_{\text{S}}^\star$ and determine feasibility by checking \eqref{eq:eq} $\forall k$.}
    \STATE{If the problem is feasible obtain $p_k$ and $\cc_k$ $\forall k\in\mathcal{K}_a$ by \eqref{eq:pk2} and \eqref{eq:kk2} and stop.}
    \ELSE
    \STATE{Set $k'=k_{\text{p}}$ if $\cc_{\text{p}}<\cc_{\text{c}}$ and $k'=k_{\text{c}}$ otherwise, and $\cc=\cc'$.}
    \STATE{Obtain $p_{k'}$ and $\cc_{k'}$ by \eqref{eq:pk2} and \eqref{eq:kk2}, and update $\mathcal{K}_{\text{a}}=\mathcal{K}_{\text{a}}\backslash\{k'\}$, $p_{\text{N}}=p_{\text{N}}+a_{k'}p_{k'}$, and $\tilde{p}_{\text{N}}=\tilde{p}_{\text{N}}+a_{k'}p_{k'}\cc_{k'}$.}
    \ENDIF
\ENDLOOP
\end{algorithmic}
\caption{Algorithm to solve $\mathcal{P}_\ell$.} \label{alg:pl}
\end{algorithm}

The whole procedure to solve the initial problem $\mathcal{P}$ is the following: The maximum value of $r$ is found via bisection method. For each step of the bisection method, $\mathcal{P}_\ell$ is solved by Algorithm \ref{alg:pl}. Upon convergence, the solution of $\mathcal{P}_\ell$ is unique and provides the optimal transmission parameters that attain the point of the rate region boundary.

\begin{figure}[t!]
\centering
\includegraphics[width=1\columnwidth]{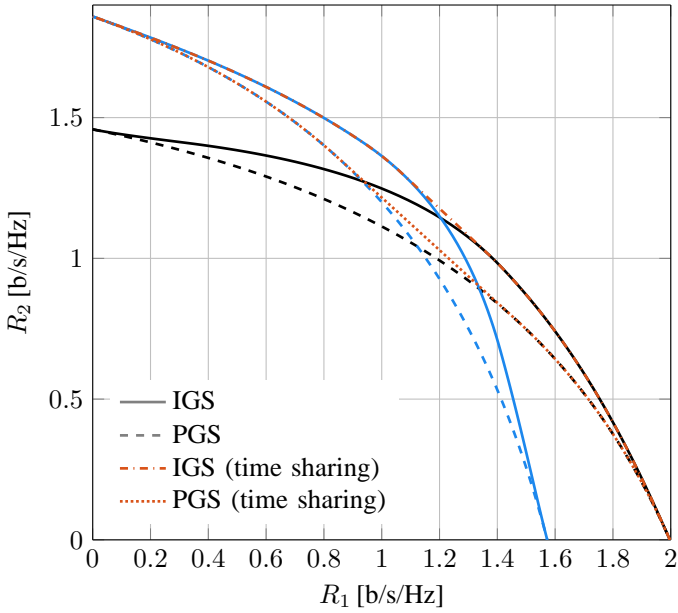}
\caption{Rate region boundaries. We consider $a_1<\beta$, $a_2<\beta$ and $a_1+a_2>\beta$ for both user orderings. We also depict the boundaries when time sharing is allowed.}
\label{fig:BoundaryEx1}
\end{figure}
\begin{figure}[t!]
\centering
\includegraphics[width=1\columnwidth]{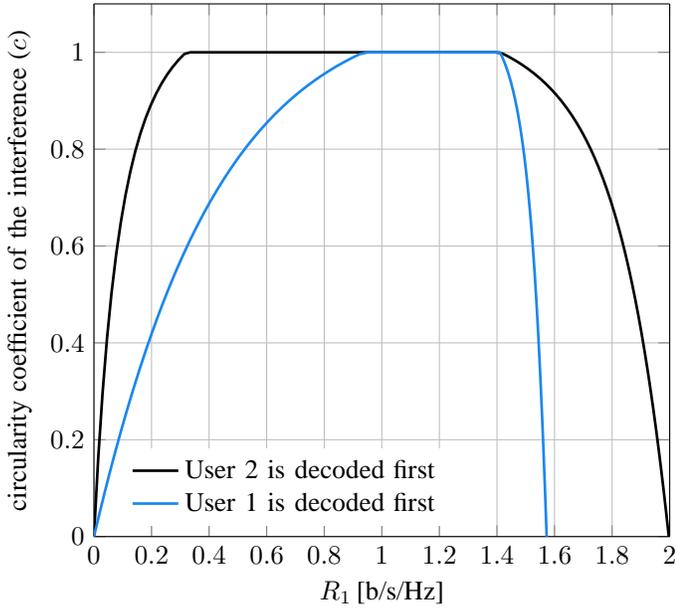}
\caption{Optimal interference circularity coefficient for the rate region boundaries in Fig. \ref{fig:BoundaryEx1}.}
\label{fig:KappaEx1}
\end{figure}

\section{Numerical examples}\label{sec:sim}
\begin{figure}[t!]
\centering
\includegraphics[width=1\columnwidth]{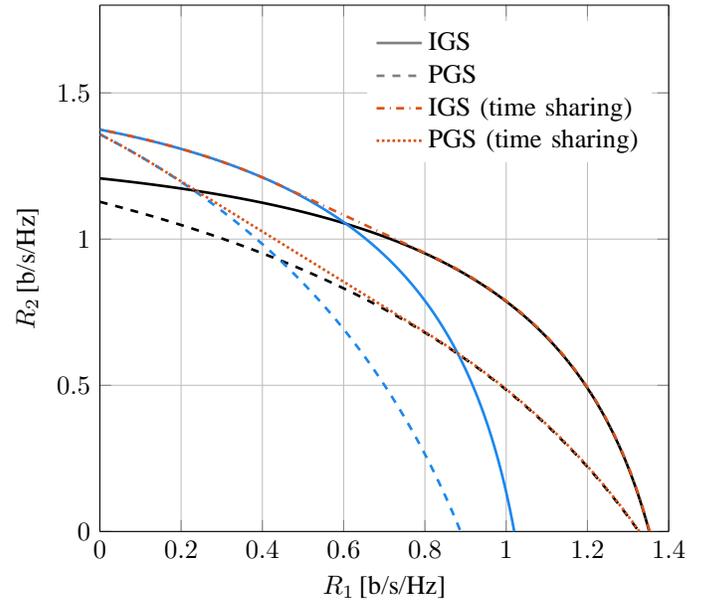}
\caption{Rate region boundaries. We consider $a_1>\beta$ and $a_2>\beta$ for both user orderings. We also depict the boundaries when time sharing is allowed.}
\label{fig:BoundaryEx2}
\end{figure}
In this section we present simulation examples to illustrate the performance improvements achieved by IGS over its proper counterpart.  We start by considering $K=2$ SUs, so that the achievable rate region for the secondary network can easily be depicted in a figure. We set the power budgets in the original model to $P'_1=P'_2=p'=100$, $\sigma^2=1$, and $\bar{R}$ as 80\% of the PU capacity. First, we consider a scenario with channels
\begin{align}
	&\begin{bmatrix}
	\h_1 & \h_2
	\end{bmatrix}=\begin{bmatrix}
	   2.6366 - j0.3382 & -2.8824 - j0.1728 \\
  -1.4428 + j1.0861 & -1.7887 + j2.0730
	\end{bmatrix} \; , \\ &\begin{bmatrix} h \\ g_1 \\ g_2\end{bmatrix}=\begin{bmatrix}-0.8815 + j0.4721 \\ 0.0533 + j0.2217 \\ 0.2221 + j0.1991\end{bmatrix} \; , \; \g=\begin{bmatrix}
		0.0533 + j0.2217 \\
   0.2221 + j0.1991
	\end{bmatrix} \; . 
\end{align}
\begin{figure}[t!]
\centering
\includegraphics[width=1\columnwidth]{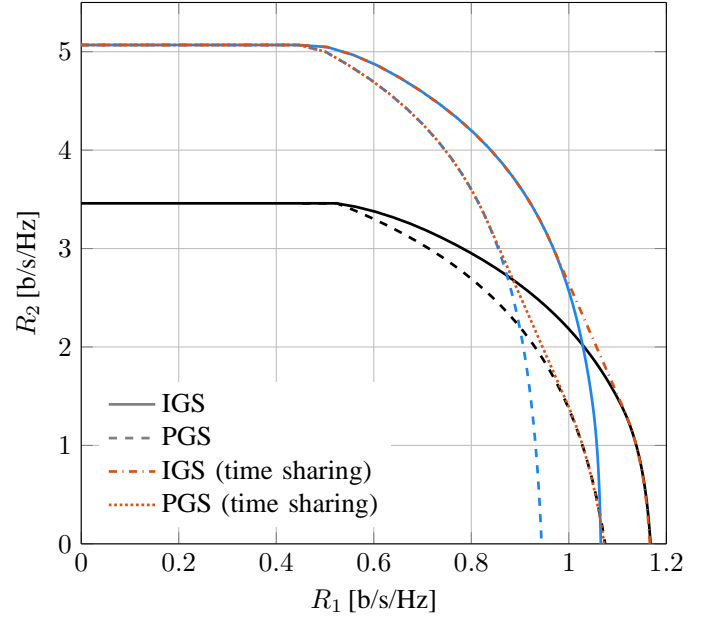}
\caption{Rate region boundaries. We consider $a_1>\beta$ and $a_2\ll\beta$ for both user orderings. We also depict the boundaries when time sharing is allowed.}
\label{fig:BoundaryEx3}
\end{figure}
These channel coefficients yield $\bar{R}=5.33$ b/s/Hz and $\beta=0.94$. The interference channel coefficients of the canonical model associated to this scenario are $a_1=0.52$ and $a_2=0.89$ when user 2 is decoded first; and $a_1=0.788$ and $a_2=0.592$ otherwise. In both cases we have $a_1+a_2>\beta$. According to Theorem \ref{th:th0}, IGS is optimal for $0<\alpha_1<1$ for both user orderings, whereas PGS is optimal for the extreme points $\alpha_1=0$ and $\alpha_1=1$ (as both $a_1$ and $a_2$ are smaller than $\beta$). The boundary of the rate regions for the optimal strategy that permits IGS (obtained by Algorithm \ref{alg:pl} and bisection method for $r$) is depicted in Fig. \ref{fig:BoundaryEx1}, along with that achieved by PGS. We also depict the rate region boundaries achieved by time sharing, which results in the convex hull of the rate regions corresponding to each user ordering.\footnote{We note that time sharing yields the convex hull when the power constraint is satisfied in each operation point. The rate region may be enlarged by constraining the average transmit power over the different operation points instead \cite{Hellings17}. Such an analysis is, however, not straight forward and falls outside the scope of this paper.} As expected, the boundaries achieved by PGS and IGS overlap at the extreme points. However, the rate region achieved by IGS is significantly larger than that obtained by PGS for both user orderings, which is especially noticeable for intermediate values of $R_1$. As can also be observed, time sharing keeps the gap between both signaling schemes almost unchanged. Fig. \ref{fig:KappaEx1} shows the optimal circularity coefficient of the aggregate interference, $\cc$, that achieves the optimal rate region boundary. We observe that these intermediate $R_1$ values, for which the IGS rate region is significantly larger than the proper one, correspond to maximally IGS, i.e., $\cc_1=1$ and $\cc_2=1$. 

For the second scenario we consider the channel coefficients
\begin{align}
	&\begin{bmatrix}
	\h_1 & \h_2
	\end{bmatrix}=\begin{bmatrix}
	      0.1599 - j0.9812 &  1.0563 + j0.8070 \\
  -0.5172 + j0.4742  & 1.1759 + j0.9756
	\end{bmatrix} \; , \\ &\begin{bmatrix} h \\ g_1 \\ g_2\end{bmatrix}=\begin{bmatrix}0.9445 + j0.3284 \\ 0.2908 + j0.1358 \\
   0.3279 + j0.1532\end{bmatrix} \; , \; \g=\begin{bmatrix}
		  -0.3209 - j0.0052 \\
  -0.1427 - j0.3326
	\end{bmatrix} \; , 
\end{align}
which also yield $\bar{R}=5.33$ b/s/Hz and $\beta=0.94$. The interference channel coefficients in the canonical model are now $a_1=1.03$ and $a_2=1.31$ when the second user is decoded first; and $a_1=1.829$ and $a_2=0.995$ otherwise. As $a_1$ and $a_2$ are both greater than $\beta$, IGS is optimal on the whole rate region boundary. This is observed in Fig. \ref{fig:BoundaryEx2}. Again, we notice that the rate improvements are more prominent for intermediate rates than at the extreme points. As an example, let us consider the region obtained when the second user is decoded first (depicted in black). While the second user can achieve a rate around 7\% higher compared to PGS for $R_1=0$, the rate improvement is approximately 40\% for $R_1=0.8$ b/s/Hz. In this example, all boundary points are achieved by maximally IGS, i.e., $\cc_1=\cc_2=1$. 

As a final example for the 2-user case, we consider the channels
\begin{align}
	&\begin{bmatrix}
	\h_1 & \h_2
	\end{bmatrix}=\begin{bmatrix}
	      2.1257 - j3.0397 & -0.4956 + j0.9835 \\
   0.5401 - j0.9356  & 2.1329 - j0.6720
	\end{bmatrix} \; , \\ &\begin{bmatrix} h \\ g_1 \\ g_2\end{bmatrix}=\begin{bmatrix}0.8292 + j0.5589 \\ -0.0869 + j0.3653 \\
   0.0301 + j0.0900\end{bmatrix} \; , \; \g=\begin{bmatrix}
		  1.1379 + j0.7371 \\
   0.2219 - j0.2120
	\end{bmatrix} \; . 
\end{align}
For these coefficients we have again $\bar{R}=5.33$ b/s/Hz and $\beta=0.94$. In the canonical model, the interference channel coefficients are $a_1=1.41$ and $a_2=0.09$ when the second user is decoded first; and $a_1=1.684$ and $a_2=0.028$ otherwise. The achievable rate region boundaries are depicted in Fig. \ref{fig:BoundaryEx3}. Since $a_2$ is very small for both user orderings, the second user can use PGS with maximum power in a large portion of the boundary, from $R_1=0$ until approximately $R_1=0.5$ b/s/Hz. In that case, the optimality condition for IGS is not met, and the optimal strategy of the first user is also PGS. Nevertheless, the rate of user 2 eventually falls below its maximum value, and IGS again starts to provide significant gains in terms of achievable rate. Indeed, even though $a_2\ll\beta$, the rate of user 2 can be substantially enlarged by IGS. For example, when user 2 is decoded first and for $R_1=1$ b/s/Hz, user 2 can achieve a rate approximately 55\% higher compared to PGS. 
\begin{figure}[t!]
\centering
\includegraphics[width=1\columnwidth]{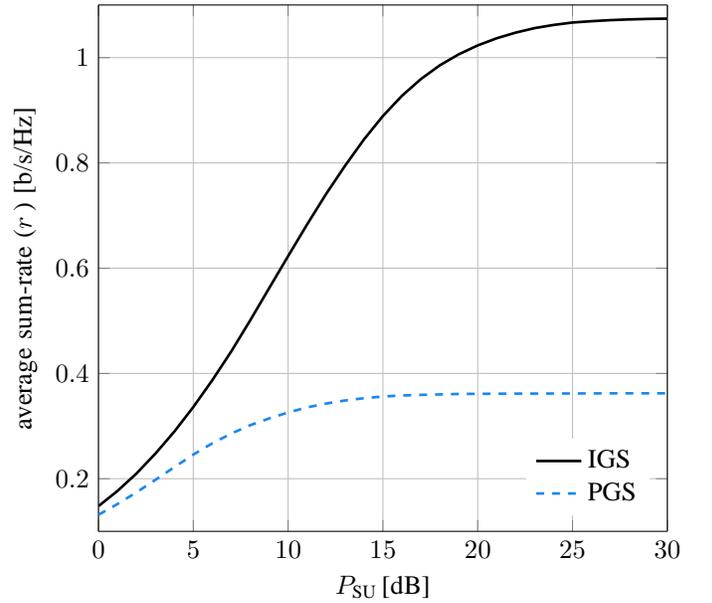}
\caption{Average sum-rate at an arbitrary point of the rate region boundary as a function of the SU power budget.}
\label{fig:rEx4}
\end{figure}

In the next scenario we consider $K=4$ and $N=4$. We consider an arbitrary point of the rate region boundary specified by $\boldsymbol{\alpha}=[0.27,0.13,0.09,0.51]$, and depict in Fig. \ref{fig:rEx4} the average sum-rate $r$ obtained after 1000 Monte Carlo simulations as a function of $P_1=P_2=\cdots=P_K=P_{\text{SU}}$. We consider Rayleigh fading, thus each channel coefficient has been independently drawn from a proper complex Gaussian distribution with zero mean and unit variance. For each channel realization, the PU rate constraint has been set as 60\% of its capacity in the absence of interference. We observe in Fig. \ref{fig:rEx4} that the improvement between IGS and PGS in terms of sum-rate grows with the power budget. This behavior is intuitively clear: as IGS permits increasing the transmit power, the increase will be larger the higher the available transmit power is. However, the performance is eventually limited by interference, thus the rate saturates as seen in Fig. \ref{fig:rEx4}. At the saturation point, we observe that IGS achieves almost three times as much sum-rate as PGS.
\begin{figure}[t!]
\centering
\includegraphics[width=1\columnwidth]{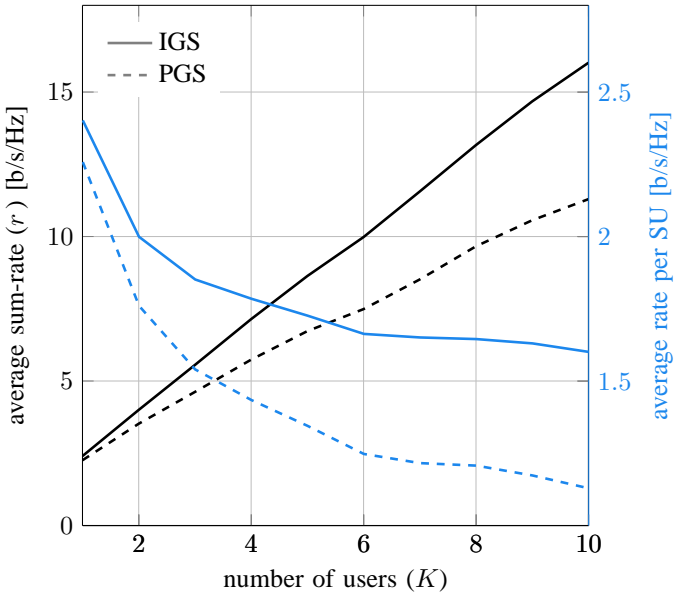}
\caption{Average sum-rate at the fairness point of the rate region boundary as a function of the number of users, with $N=K$.}
\label{fig:rEx6}
\end{figure}

As a final example, we evaluate the average sum-rate as the number of users $K$ grows, with $N=K$. Each channel coefficient follows a proper complex Gaussian distribution with zero mean and unit variance, except for the PU-SU channel $\g$, which is set to zero. We consider the point on the rate region boundary characterized by $\alpha_k=\frac{1}{K}$, $\forall k$. This is the fairness point as each SU attains the same rate. The results are depicted in Fig. \ref{fig:rEx6}, where both the average sum-rate and average rate per user are depicted. As observed, the gap in the sum-rate between PGS and IGS grows with the number of users. This is not only due to the fact that the single-user gap is scaled by a factor of $K$, but also because the rate per SU presents an increasing gap between IGS and PGS, as it is also shown in Fig. \ref{fig:rEx6}. This result indicates that IGS becomes more profitable as the number of users increases, showing its suitability for multiuser interference networks. 

\section{Conclusions}\label{sec:con}
In this paper we have analyzed IGS for an underlay MAC with ZF decoding, which shares the spectrum with a PU. Firstly, we have derived a necessary and sufficient condition for IGS to outperform PGS. Secondly, we have presented an efficient algorithm to compute every point of the boundary of the rate region. Several numerical examples have shown that IGS can achieve significantly higher rates than PGS, especially as the number of users increases. Our results provide performance limits of an underlay MAC with ZF decoding and can be used as design guidelines for arbitrary decoding schemes, when only partial CSI is available, and also outside the cognitive radio context if the point-to-point transmitter is restricted to PGS due to lack of CSI or because it is a legacy user.

\section*{Acknowledgments}
The work of C. Lameiro and P. J. Schreier was supported by the German Research Foundation (DFG) under grants SCHR 1384/6-1 and LA 4107/1-1. The work of I. Santamar{\'i}a was supported by the Ministerio de Econom{\'i}a y Competitividad (MINECO) and AEI/FEDER funds of the UE, Spain, under projects RACHEL (TEC2013-47141-C4-3-R) and CARMEN (TEC2016-75067-C4-4-R).

\appendices
	
\section{Proof of Lemma \ref{th:l1}}\label{app:l1}
We start presenting the following auxiliary lemma, that will be used in the proof.
\begin{lemma}\label{th:l0}
	$R_{\text{\normalfont S}}(\cc_{\text{\normalfont S}})$ is non-decreasing at $\cc_{\text{\normalfont S}}$ if and only if
	\begin{equation}\label{eq:drsuf}
		a_{\text{\normalfont S}}q(\cc_{\text{\normalfont S}})\cc_{\text{\normalfont S}}\left(1-\frac{p_{\text{\normalfont I}}+\beta}{a_{\text{\normalfont S}}}\right)+p_{\text{\normalfont I}}\cc_{\text{\normalfont I}}\left[1+q(\cc_{\text{\normalfont S}})\right]\geq 0 \; .
	\end{equation}
\end{lemma}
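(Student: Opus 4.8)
The plan is to exploit that the logarithm in \eqref{eq:Rk0} is strictly increasing, so $R_{\text{S}}(\cc_{\text{S}})$ is non-decreasing at $\cc_{\text{S}}$ precisely when its argument
\begin{equation*}
    f(\cc_{\text{S}}) = 1 + q(\cc_{\text{S}})\left[q(\cc_{\text{S}})\left(1-\cc_{\text{S}}^2\right)+2\right]
\end{equation*}
is non-decreasing, where $q=q(\cc_{\text{S}})$ is the power that makes the rate constraint \eqref{eq:psqs} bind. Writing $q'=\mathrm{d}q/\mathrm{d}\cc_{\text{S}}$, differentiation gives
\begin{equation*}
    \frac{\mathrm{d}f}{\mathrm{d}\cc_{\text{S}}} = 2q'\left[q\left(1-\cc_{\text{S}}^2\right)+1\right]-2q^2\cc_{\text{S}} \; ,
\end{equation*}
so the claim reduces to determining the sign of $q'\left[q\left(1-\cc_{\text{S}}^2\right)+1\right]-q^2\cc_{\text{S}}$.

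The next step is to obtain $q'$ by implicit differentiation. Regarding \eqref{eq:psqs} with equality as $G(q,\cc_{\text{S}})=0$, where $G$ denotes the difference of its two sides, a direct computation yields
\begin{align*}
    \frac{\partial G}{\partial \cc_{\text{S}}} &= -2a_{\text{S}}\,q\left(a_{\text{S}}q\cc_{\text{S}}+p_{\text{I}}\cc_{\text{I}}\right) \; , \\
    \frac{\partial G}{\partial q} &= 2a_{\text{S}}\left[a_{\text{S}}q\left(1-\cc_{\text{S}}^2\right)+\beta+p_{\text{I}}\left(1-\cc_{\text{S}}\cc_{\text{I}}\right)\right] =: 2a_{\text{S}}D \; ,
\end{align*}
where I abbreviate the bracketed factor as $D$. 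Hence $q' = q\left(a_{\text{S}}q\cc_{\text{S}}+p_{\text{I}}\cc_{\text{I}}\right)/D$.

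Before substituting, I would establish that $D>0$. Since $q(\cc_{\text{S}})$ is the largest power for which \eqref{eq:psqs} holds, increasing $p_{\text{S}}$ beyond $q$ must violate the constraint, i.e., $\partial G/\partial q = 2a_{\text{S}}D\geq 0$; with $a_{\text{S}}>0$ this gives $D>0$. This positivity is what lets me multiply the inequality through by $D/q>0$ without flipping it. The crux is then the algebraic simplification: substituting $q'$ into $q'\left[q\left(1-\cc_{\text{S}}^2\right)+1\right]-q^2\cc_{\text{S}}\geq 0$ and clearing $D$ gives
\begin{equation*}
    \left(a_{\text{S}}q\cc_{\text{S}}+p_{\text{I}}\cc_{\text{I}}\right)\left[q\left(1-\cc_{\text{S}}^2\right)+1\right] \geq q\cc_{\text{S}}D \; .
\end{equation*}
Expanding both sides, the cubic terms $a_{\text{S}}q^2\cc_{\text{S}}\left(1-\cc_{\text{S}}^2\right)$ cancel, and collecting the rest (in particular using $p_{\text{I}}\cc_{\text{I}}[q(1-\cc_{\text{S}}^2)+q\cc_{\text{S}}^2+1]=p_{\text{I}}\cc_{\text{I}}(q+1)$) collapses the inequality to
\begin{equation*}
    q\cc_{\text{S}}\left(a_{\text{S}}-p_{\text{I}}-\beta\right)+p_{\text{I}}\cc_{\text{I}}\left(q+1\right)\geq 0 \; ,
\end{equation*}
which is exactly \eqref{eq:drsuf} after factoring $a_{\text{S}}$ out of the first term. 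I expect the bookkeeping in this cancellation to be the only delicate part; everything else is routine once $q'$ and the sign of $D$ are in hand.
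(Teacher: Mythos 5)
Your proof is correct and follows essentially the same route as the paper's: monotonicity of the exponential reduces the claim to the sign of the derivative of $2^{2R_{\text{S}}}-1$, implicit differentiation of the binding constraint \eqref{eq:psqs} supplies $q'(\cc_{\text{S}})$ (your expression matches \eqref{eq:dq} exactly), and the algebra collapses to \eqref{eq:drsuf}. The only difference is cosmetic --- the paper substitutes the constraint into $2^{2R_{\text{S}}}-1$ before differentiating while you differentiate first and cancel afterwards --- and your explicit check that $D>0$ is a small extra point of care the paper leaves implicit.
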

\begin{proof}
	Let us rewrite \eqref{eq:Rk0} as
	\begin{equation}
		2^{2R_{\text{S}}(\cc_{\text{S}})}-1=p_{\text{S}}(\cc_{\text{S}})^2(1-\cc_{\text{S}}^2)+2p_{\text{S}}(\cc_{\text{S}}) \; .
	\end{equation}
	Taking $p_{\text{S}}(\cc_{\text{S}})=q(\cc_{\text{S}})$, \eqref{eq:psqs} holds with equality. Therefore, we can replace $p_{\text{S}}(\cc_{\text{S}})^2(1-\cc_{\text{S}}^2)$ in the foregoing equation by  
	\begin{align}
	&p_{\text{S}}(\cc_{\text{S}})^2(1-\cc_{\text{S}}^2)=\frac{1}{a_{\text{S}}^2}\{(1-\beta)[p+2(1+a_{\text{S}}q(\cc_{\text{S}})+p_{\text{I}})]-1\notag\\&-2(a_{\text{S}}q(\cc_{\text{S}})+p_{\text{I}})-p_{\text{I}}^2(1-\cc_{\text{I}}^2)-2a_{\text{S}}q(\cc_{\text{S}})p_{\text{I}}(1-\cc_{\text{S}}\cc_{\text{I}})\} \; ,
	\end{align}
	which yields
	\begin{align}
		&2^{2R_{\text{S}}(\cc_{\text{S}})}-1=\frac{1}{a_{\text{S}}^2}\left\{(1-\beta)[p+2(1+a_{\text{S}}q(\cc_{\text{S}})+p_{\text{I}})]-1\right.\notag\\&-2(a_{\text{S}}q(\cc_{\text{S}})+p_{\text{I}})-p_{\text{I}}^2(1-\cc_{\text{I}}^2)\left.-2a_{\text{S}}q(\cc_{\text{S}})p_{\text{I}}(1-\cc_{\text{S}}\cc_{\text{I}})\right\}\notag\\&+2q(\cc_{\text{S}}) \; .
\end{align}
	To evaluate whether or not $R_{\text{S}}(\cc_{\text{S}})$ is increasing, we can alternatively consider $2^{2R_{\text{S}}(\cc_{\text{S}})}-1$. Taking this into account we obtain
	\begin{align}\label{eq:drsu}
	\frac{\partial R_{\text{S}}(\cc_{\text{S}})}{\partial \cc_{\text{S}}}\geq 0 \; \Leftrightarrow \; &\frac{\partial q(\cc_{\text{S}})}{\partial \cc_{\text{S}}}\left[1-\frac{p_{\text{I}}(1-\cc_{\text{S}}\cc_{\text{I}})+\beta}{a_{\text{S}}}\right]\notag\\&+\frac{q(\cc_{\text{S}})p_{\text{I}}\cc_{\text{I}}}{a_{\text{S}}}\geq 0 \; .
	\end{align}
	To obtain the derivative of $q(\cc_{\text{S}})$, we make use of \eqref{eq:psqs}. Replacing $p_{\text{S}}(\cc_{\text{S}})$ with $q(\cc_{\text{S}})$, \eqref{eq:psqs} holds with equality. We can then take the derivative of both sides of the equality with respect to $\cc_{\text{S}}$, which yields, after some manipulations, 
	\begin{equation}\label{eq:dq}
		\frac{\partial q(\cc_{\text{S}})}{\partial \cc_{\text{S}}}=\frac{q(\cc_{\text{S}})[a_{\text{S}}q(\cc_{\text{S}})\cc_{\text{S}}+p_{\text{I}}\cc_{\text{I}}]}{a_{\text{S}}q(\cc_{\text{S}})(1-\cc_{\text{S}}^2)+p_{\text{I}}(1-\cc_{\text{S}}\cc_{\text{I}})+\beta} \; .
	\end{equation}
	Combining \eqref{eq:dq} and \eqref{eq:drsu}, we obtain, after some manipulations,
	\begin{align}
		\frac{\partial R_{\text{S}}(\cc_{\text{S}})}{\partial \cc_{\text{S}}}\geq 0 \; \Leftrightarrow \; &a_{\text{S}}q(\cc_{\text{S}})\cc_{\text{S}}\left(1-\frac{p_{\text{I}}+\beta}{a}\right)\notag\\&+p_{\text{I}}\cc_{\text{I}}\left[1+q(\cc_{\text{S}})\right]\geq 0 \; .
	\end{align}
\end{proof}
To prove Lemma \ref{th:l1} we have to show that \eqref{eq:drsuf} is only satisfied for $0\leq\cc_{\text{S}}\leq\cc_{\text{R}}$ and some $\cc_{\text{R}}$. First, by Lemma \ref{th:l0}, the derivative is positive at $\cc_{\text{S}}=0$ if
	\begin{equation}
		\left.\frac{\partial R_{\text{S}}(\cc_{\text{S}})}{\partial \cc_{\text{S}}}\right|_{\cc_{\text{S}}=0}> 0 \; \Leftrightarrow \; p_{\text{I}}\cc_{\text{I}}\left[1+q(0)\right]> 0 \; ,
	\end{equation}
	which always holds since $q(0)\geq 0$, $p_{\text{I}}>0$ and $\cc_{\text{I}}>0$. To see what happens as $\cc_{\text{S}}$ increases, we may consider two cases. When $1-\frac{p_{\text{S}}+\beta}{a_{\text{S}}}\geq 0$, \eqref{eq:drsuf} is satisfied for $0\leq\cc_{\text{S}}\leq1$, since all the involved quantities are non-negative. Therefore, in this case, $R_{\text{S}}(\cc_{\text{S}})$ is increasing for all values of $\cc_{\text{S}}$. Now let $1-\frac{p_{\text{I}}+\beta}{a_{\text{S}}}<0$. Let us also rewrite \eqref{eq:drsuf} as
	\begin{align}\label{eq:drsuf2}
		\frac{\partial R_{\text{S}}(\cc_{\text{S}})}{\partial \cc_{\text{S}}}\geq 0 \; \Leftrightarrow \; &a_{\text{S}}q(\cc_{\text{S}})\left[\cc_{\text{S}}\left(1-\frac{p_{\text{I}}+\beta}{a_{\text{S}}}\right)+\frac{p_{\text{I}}\cc_{\text{I}}}{a_{\text{S}}}\right]\notag\\&+p_{\text{I}}\cc_{\text{I}}\geq 0 \; .
	\end{align}
	As $1-\frac{p_{\text{S}}+\beta}{a_{\text{S}}}<0$, the term that multiplies $q(\cc_{\text{S}})$ in the above expression decreases with $\cc_{\text{S}}$. It may then happen that this term becomes negative, in which case $a_{\text{S}}q(\cc_{\text{S}})\left[\cc_{\text{S}}\left(1-\frac{p_{\text{I}}+\beta}{a_{\text{S}}}\right)+\frac{p_{\text{I}}\cc_{\text{I}}}{a_{\text{S}}}\right]$ becomes a negative and decreasing function (because $q(\cc_{\text{S}})$ is increasing in $\cc_{\text{S}}$ as observed in \eqref{eq:dq}). Therefore, there may exist $\cc_{\text{R}}$ such that \eqref{eq:drsuf2} holds with equality for $\cc_{\text{S}}=\cc_{\text{R}}$, in which case \eqref{eq:drsuf2} will not hold for $\cc_{\text{S}}>\cc_{\text{R}}$. This concludes the proof.	
	
\section{Proof of Theorem \ref{th:th1}}\label{app:th1}
Let us start with the first circularity coefficient, $\cc_{\text{B}}$, associated with the power budget constraint. If $\cc_0$ is such that $q(\cc_0)=P_{\text{S}}$, then the transmit power of the SU is $p_{\text{S}}(\cc_{\text{S}})=P_{\text{S}}$ for $\cc_{\text{S}}\geq\cc_0$. Therefore, its achievable rate is decreasing in that interval as can be seen from \eqref{eq:Rk0}, and hence the optimal circularity coefficient $\cc_{\text{S}}^\star$ cannot be greater than $\cc_0$. Since it may happen that such a $\cc_0$ does not exist, i.e., $q(\cc_{\text{S}})<P_{\text{S}}$ or $q(\cc_{\text{S}})>P_{\text{S}}$ can be fulfilled for $0\leq\cc_{\text{S}}\leq1$, we obtain $\cc_{\text{S}}^\star\leq\cc_{\text{B}}$. Let us now consider the second circularity coefficient, $\cc_{\text{R}}$, which is related to the PU rate constraint. Since $\cc_{\text{B}}$ considers the power budget constraint, we drop this constraint at this point to obtain $\cc_{\text{R}}$. By Lemma \ref{th:l1} we know that, when $p(\cc_{\text{S}})=q(\cc_{\text{S}})$, $R_{\text{S}}(\cc_{\text{S}})$ achieves its maximum at the maximum value of $\cc_{\text{S}}$ for which $\frac{\partial R_{\text{S}}(\cc_{\text{S}})}{\partial \cc_{\text{S}}}\geq0$. Thus, $\cc_{\text{S}}=1$ is optimal if the foregoing derivative is non-negative at $\cc_{\text{S}}=1$. If this condition does not hold, there is, according to Lemma \ref{th:l1}, $\cc_{\text{R}}<1$ such that $\left.\frac{\partial R_{\text{S}}(\cc_{\text{S}})}{\partial \cc_{\text{S}}}\right|_{\cc_{\text{S}}=\cc_{\text{R}}}=0$, corresponding to the maximum of $R_{\text{S}}(\cc_{\text{S}})$. This yields \eqref{eq:kR}. Since $R_{\text{S}}(\cc_{\text{S}})$ is increasing in the interval $0\leq\cc_{\text{S}}\leq\cc_{\text{R}}$, we finally obtain \eqref{eq:kStar}.

\section{Proof of Corollary \ref{th:th11}}\label{app:th11}
The closed-from expression for $q(\cc_{\text{S}})$ is obtained by taking $p_{\text{S}}=q(\cc_{\text{S}})$ in \eqref{eq:psqs} with equality, which yields a second-order equation for $q(\cc_{\text{S}})$ and hence a closed-from expression.

To obtain a closed-form expression for $\cc_{\text{B}}$ we proceed as follows. First, since $q(\cc_{\text{S}})$ is an increasing function, by \eqref{eq:kP} we have $\cc_{\text{B}}=0$ for $P_{\text{S}}\leq q(0)$. Otherwise, assuming that $q(\cc_{\text{B}})=P_{\text{S}}$, we take  $p_{\text{S}}=P_{\text{S}}$ in \eqref{eq:psqs} with equality, which yields a second-order equation, thereby obtaining $\cc_{\text{B}}$ in closed-form. If the resulting solution is invalid, this means that $\cc_{\text{B}}=1$ and $q(\cc_{\text{B}})<P_{\text{S}}$.

To obtain the closed-form expression of $\cc_{\text{R}}$ we first notice that $\cc_{\text{R}}=1$ if the condition in \eqref{eq:kR} is fulfilled for $\cc_{\text{S}}=1$. By replacing $p_{\text{S}}=q(\cc_{\text{S}})$ in \eqref{eq:psqs} and taking equality and $\cc_{\text{S}}=1$ we have
	\begin{align}\label{eq:q21}
		q(1)&=\frac{(1-\beta)(p+p_{\text{I}}+2)-p_{\text{I}}^2(1-\cc_{\text{I}}^2)-2p_{\text{I}}-1}{2a_{\text{S}}\left[p_{\text{I}}(1-\cc_{\text{I}})+\beta\right]}\notag\\&=\frac{\bar{p}^2-\left[p_{\text{I}}(1-\cc_{\text{I}})+\beta\right]\left[p_{\text{I}}(1+\cc_{\text{I}})+\beta\right]}{2a_{\text{S}}\left[p_{\text{I}}(1-\cc_{\text{I}})+\beta\right]} \; ,
	\end{align}
	where $\bar{p}=\frac{p2^{\bar{R}}}{2^{2\bar{R}}-1}$. Using this expression, we replace $\cc_{\text{S}}=1$ in the condition in \eqref{eq:kR} obtaining that $\cc_{\text{R}}=1$ if and only if $a_{\text{S}}\geq\xi$, with $\xi$ given by \eqref{eq:mu}. Finally, if $\cc_{\text{R}}<1$, we obtain, by taking equality in \eqref{eq:kR},
\begin{equation}
	q(\cc_{\text{R}})=\frac{p_{\text{I}}\cc_{\text{I}}}{p_{\text{I}}(\cc_{\text{R}}-\cc_{\text{I}})-a_{\text{S}}\cc_{\text{R}}\left(1-\frac{\beta}{a_{\text{S}}}\right)} \; .
\end{equation}
Replacing $p_{\text{S}}$ with the above $q(\cc_{\text{R}})$ in \eqref{eq:psqs}, and taking $\cc_{\text{S}}=\cc_{\text{R}}$ and equality, we obtain also a second-order equation for $\cc_{\text{R}}$ and thus a closed-form expression.

\bibliographystyle{IEEEtran}
\bibliography{SecondaryMAC}

\begin{thebibliography}{10}
\providecommand{\url}[1]{#1}
\csname url@samestyle\endcsname
\providecommand{\newblock}{\relax}
\providecommand{\bibinfo}[2]{#2}
\providecommand{\BIBentrySTDinterwordspacing}{\spaceskip=0pt\relax}
\providecommand{\BIBentryALTinterwordstretchfactor}{4}
\providecommand{\BIBentryALTinterwordspacing}{\spaceskip=\fontdimen2\font plus
\BIBentryALTinterwordstretchfactor\fontdimen3\font minus
  \fontdimen4\font\relax}
\providecommand{\BIBforeignlanguage}[2]{{%
\expandafter\ifx\csname l@#1\endcsname\relax
\typeout{** WARNING: IEEEtran.bst: No hyphenation pattern has been}%
\typeout{** loaded for the language `#1'. Using the pattern for}%
\typeout{** the default language instead.}%
\else
\language=\csname l@#1\endcsname
\fi
#2}}
\providecommand{\BIBdecl}{\relax}
\BIBdecl

\bibitem{Goldsmith2009}
A.~Goldsmith, S.~Jafar, I.~Maric, and S.~Srinivasa, ``Breaking spectrum
  gridlock with cognitive radios: An information theoretic perspective,''
  \emph{Proceedings of the IEEE}, vol.~97, no.~5, pp. 894--914, May 2009.

\bibitem{Cover}
T.~M. Cover and J.~A. Thomas, \emph{Elements of Information Theory (Wiley
  Series in Telecommunications and Signal Processing)}.\hskip 1em plus 0.5em
  minus 0.4em\relax New York, NY, USA: Wiley-Interscience, 2006.

\bibitem{Neeser1993}
F.~D. Neeser and J.~L. Massey, ``Proper complex random processes with
  applications to information theory,'' \emph{IEEE Transactions on Information
  Theory}, vol.~39, no.~4, pp. 1293--1302, Jul. 1993.

\bibitem{Schreier}
P.~J. Schreier and L.~L. Scharf, \emph{Statistical signal processing of
  complex-valued data: the theory of improper and noncircular signals}.\hskip
  1em plus 0.5em minus 0.4em\relax Cambridge, UK: Cambridge Univ. Press, 2010.

\bibitem{Cadambe2010}
V.~Cadambe, S.~Jafar, and C.~Wang, ``Interference alignment with asymmetric
  complex signaling---settling the {H{\o}st-Madsen-Nosratinia} conjecture,''
  \emph{IEEE Transactions on Information Theory}, vol.~56, no.~9, pp.
  4552--4565, Sep. 2010.

\bibitem{Ho2012}
Z.~Ho and E.~Jorswieck, ``Improper {G}aussian signaling on the two-user {SISO}
  interference channel,'' \emph{IEEE Transactions on Wireless Communications},
  vol.~11, no.~9, pp. 3194--3203, Sep. 2012.

\bibitem{Zeng2013}
Y.~Zeng, C.~Yetis, E.~Gunawan, Y.~Guan, and R.~Zhang, ``Transmit optimization
  with improper {G}aussian signaling for interference channels,'' \emph{IEEE
  Transactions on Signal Processing}, vol.~61, no.~11, pp. 2899--2913, Jun.
  2013.

\bibitem{LagenMorancho2015}
S.~Lagen, A.~Agustin, and J.~Vidal, ``Coexisting linear and widely linear
  transceivers in the {MIMO} interference channel,'' \emph{IEEE Transactions on
  Signal Processing}, vol.~64, no.~3, pp. 652--664, Feb. 2016.

\bibitem{Nguyen2015}
H.~Nguyen, R.~Zhang, and S.~Sun, ``Improper signaling for symbol error rate
  minimization in {$K$}-user interference channel,'' \emph{IEEE Transactions on
  Communications}, vol.~63, no.~3, pp. 857--869, Mar. 2015.

\bibitem{Lameiro2017}
C.~Lameiro, I.~Santamar\'ia, and P.~J. Schreier, ``Rate region boundary of the
  {SISO Z}-interference channel with improper signaling,'' \emph{IEEE
  Transactions on Communications}, vol.~65, no.~3, pp. 1022--1034, Mar. 2017.

\bibitem{Lagen2014}
S.~Lagen, A.~Agustin, and J.~Vidal, ``Improper {G}aussian signaling for the
  {Z}-interference channel,'' in \emph{Proceedings of the IEEE International
  Conference on Acoustics, Speech and Signal Processing (ICASSP)}, 2014, pp.
  1145--1149.

\bibitem{Lagen2016}
------, ``On the superiority of improper {G}aussian signaling in wireless
  interference {MIMO} scenarios,'' \emph{IEEE Transactions on Communications},
  vol.~64, no.~8, pp. 3350--3368, Aug. 2016.

\bibitem{Kurniawan2015}
E.~Kurniawan and S.~Sun, ``Improper {G}aussian signaling scheme for the
  {Z}-interference channel,'' \emph{IEEE Transactions on Wireless
  Communications}, vol.~14, no.~7, pp. 3912--3923, Jul. 2015.

\bibitem{Zeng2013_2}
Y.~Zeng, R.~Zhang, E.~Gunawan, and Y.~L. Guan, ``Optimized transmission with
  improper {G}aussian signaling in the {$K$}-user {MISO} interference
  channel,'' \emph{IEEE Transactions on Wireless Communications}, vol.~12,
  no.~12, pp. 6303--6313, Dec. 2013.

\bibitem{Park2013}
H.~Park, S.~H. Park, J.~S. Kim, and I.~Lee, ``{SINR} balancing techniques in
  coordinated multi-cell downlink systems,'' \emph{IEEE Transactions on
  Wireless Communications}, vol.~12, no.~2, pp. 626--635, Feb. 2013.

\bibitem{LameiroSantamariaSchreier:2015:Benefits-of-Improper-Signaling-for-Underlay}
C.~Lameiro, I.~Santamar{\'\i}a, and P.~J. Schreier, ``Benefits of improper
  signaling for underlay cognitive radio,'' \emph{IEEE Wireless Communications
  Letters}, vol.~4, pp. 22--25, Feb. 2015.

\bibitem{LameiroSantamariaSchreier:2015:Analysis-of-maximally-improper-signalling}
------, ``Analysis of maximally improper signalling schemes for underlay
  cognitive radio,'' in \emph{Proceedings of the IEEE International Conference
  on Communications (ICC)}, 2015, pp. 1398--1403.

\bibitem{Amin2016}
O.~Amin, W.~Abediseid, and M.~S. Alouini, ``Underlay cognitive radio systems
  with improper {G}aussian signaling: Outage performance analysis,'' \emph{IEEE
  Transactions on Wireless Communications}, vol.~15, no.~7, pp. 4875--4887,
  Jul. 2016.

\bibitem{Amin2017}
------, ``Overlay spectrum sharing using improper {G}aussian signaling,''
  \emph{IEEE Journal on Selected Areas in Communications}, vol.~35, no.~1, pp.
  50--62, Jan. 2017.

\bibitem{Gaafar2017}
M.~Gaafar, O.~Amin, W.~Abediseid, and M.~S. Alouini, ``Underlay spectrum
  sharing techniques with in-band full-duplex systems using improper {G}aussian
  signaling,'' \emph{IEEE Transactions on Wireless Communications}, vol.~16,
  no.~1, pp. 235--249, Jan. 2017.

\bibitem{Gaafar2016}
M.~Gaafar, M.~G. Khafagy, O.~Amin, and M.~S. Alouini, ``Improper {G}aussian
  signaling in full-duplex relay channels with residual self-interference,'' in
  \emph{Proceedings of the IEEE International Conference on Communications
  (ICC)}, 2016, pp. 1--7.

\bibitem{Gaafar2016letter}
M.~Gaafar, O.~Amin, A.~Ikhlef, A.~Chaaban, and M.~S. Alouini, ``On alternate
  relaying with improper {G}aussian signaling,'' \emph{IEEE Communications
  Letters}, vol.~20, no.~8, pp. 1683--1686, Aug. 2016.

\bibitem{Zhang2009}
R.~Zhang, S.~Cui, and Y.~C. Liang, ``On ergodic sum capacity of fading
  cognitive multiple-access and broadcast channels,'' \emph{IEEE Transactions
  on Information Theory}, vol.~55, no.~11, pp. 5161--5178, Nov. 2009.

\bibitem{Jorswieck2011}
E.~A. Jorswieck and J.~Lv, ``Spatial shaping in cognitive {MIMO MAC} with coded
  legacy transmission,'' in \emph{Proceedings of the IEEE International
  Workshop on Signal Processing Advances in Wireless Communications (SPAWC)},
  2011, pp. 451--455.

\bibitem{Ozcan2013}
G.~Ozcan and M.~C. Gursoy, ``Performance analysis of primary and secondary
  users in a cognitive multiple-access channel,'' in \emph{Proceedings of the
  IEEE Global Communications Conference (GLOBECOM)}, 2013, pp. 3784--3789.

\bibitem{Sboui2015}
L.~Sboui, H.~Ghazzai, Z.~Rezki, and M.~S. Alouini, ``Achievable rate of a
  cognitive {MIMO} multiple access channel with multi-secondary users,''
  \emph{IEEE Communications Letters}, vol.~19, no.~3, pp. 403--406, Mar. 2015.

\bibitem{Zhang2008cr}
L.~Zhang, Y.~C. Liang, and Y.~Xin, ``Joint beamforming and power allocation for
  multiple access channels in cognitive radio networks,'' \emph{IEEE Journal on
  Selected Areas in Communications}, vol.~26, no.~1, pp. 38--51, Jan. 2008.

\bibitem{Sezgin2017}
A.~Kariminezhad, A.~Chaaban, and A.~Sezgin, ``Interference {MAC}: Impact of
  improper {Gaussian} signaling on the rate region {Pareto} boundary,'' in
  \emph{Proceedings of the 25th European Signal Processing Conference
  (EUSIPCO)}, 2017, pp. 2536--2540.

\bibitem{Varanasi1997}
M.~K. Varanasi and T.~Guess, ``Optimum decision feedback multiuser equalization
  with successive decoding achieves the total capacity of the {G}aussian
  multiple-access channel,'' in \emph{Conference Record of the Thirty-First
  Asilomar Conference on Signals, Systems and Computers}, vol.~2, 1997, pp.
  1405--1409.

\bibitem{Caire2003}
G.~Caire and S.~Shamai, ``On the achievable throughput of a multiantenna
  {G}aussian broadcast channel,'' \emph{IEEE Transactions on Information
  Theory}, vol.~49, no.~7, pp. 1691--1706, Jul. 2003.

\bibitem{Tran2013}
L.~N. Tran, M.~Juntti, M.~Bengtsson, and B.~Ottersten, ``Beamformer designs for
  {MISO} broadcast channels with zero-forcing dirty paper coding,'' \emph{IEEE
  Transactions on Wireless Communications}, vol.~12, no.~3, pp. 1173--1185,
  Mar. 2013.

\bibitem{Mohseni2006}
M.~Mohseni, R.~Zhang, and J.~M. Cioffi, ``Optimized transmission for fading
  multiple-access and broadcast channels with multiple antennas,'' \emph{IEEE
  Journal on Selected Areas in Communications}, vol.~24, no.~8, pp. 1627--1639,
  Aug. 2006.

\bibitem{Hellings17}
C.~Hellings and W.~Utschick, ``Improper signaling versus time-sharing in the
  {SISO Z}-interference channel,'' \emph{IEEE Communications Letters}, vol.~21,
  no.~11, pp. 2432--2435, Nov. 2017.

\end{thebibliography}

\end{document}